\def\protocol#1{{\normalfont\textsc{#1}}\xspace}
\newcommand{\approximate}{\protocol{Approximate}}
\newtheorem{lemma}{Lemma}
\newtheorem{theorem}{Theorem}
\newtheorem{remark}{Remark}
\newtheorem{proposition}{Proposition}
\theoremstyle{definition}
 \newtheorem{definition}{Definition}
\newcommand{\cold}{\protocol{CollisionDetection}}
\newcommand{\coldbfull}{\protocol{CollisionDetection\allowbreak{}WithBounds}}
\newcommand{\coldb}{\protocol{CDWB}}
\newcommand{\phaseclock}{\protocol{PhaseClock}}
\newcommand{\epidemic}{\protocol{Epidemic}}
\newcommand{\collision}{\mathtt{collision}}
\newcommand{\rank}{\mathtt{rank}}
\newcommand{\epoch}{\mathtt{epoch}}
\newcommand{\timer}{\mathtt{timer}}
\newcommand{\grid}{\mathtt{gid}}
\newcommand{\children}{\mathtt{infectivity}}
\newcommand{\countfin}{\mathtt{countFin}}
\newcommand{\level}{\mathtt{level}}
\newcommand{\lognum}{\mathtt{logNum}}
\newcommand{\var}{\mathtt{var}}
\newcommand{\leader}{\mathtt{leader}}
\newcommand{\aaa}{a}
\newcommand{\bbb}{b}
 \newcommand{\nlower}{n_{L}}
 \newcommand{\nupper}{n_{U}}
\newcommand{\outputs}{f_{\mathrm{out}}} 
\newcommand{\sinit}{s_{\mathrm{init}}}
\newcommand{\tdone}{t_{\mathrm{done}}}
\newcommand{\rs}{\mathbf{\Gamma}} 
\newcommand{\cinit}{\mathcal{C}_{\mathrm{init}}}
\newcommand{\ie}{i.e.}
\newcommand{\etal}{et~al.}
\newcommand{\poly}{\mathit{poly}}
\newcommand{\childmax}{\left \lfloor  \log \nlower - \log \ell \right \rfloor-2}
\title{
Sublinear-time Collision Detection with a Polynomial Number of States in Population Protocols
} 
\date{}
\author{Takumi Araya}
\author{Yuichi Sudo}
\affil{Hosei University, Tokyo, Japan}
\begin{document}

\maketitle


\begin{abstract}
This paper addresses the collision detection problem in population protocols. The network consists of state machines called agents. At each time step, exactly one pair of agents is chosen uniformly at random to have an interaction, changing the states of the two agents. The collision detection problem involves each agent starting with an input integer between $1$ and 
$n$, where $n$ is the number of agents, and requires those agents to determine whether there are any duplicate input values among all agents. Specifically, the goal is for all agents to output false if all input values are distinct, and true otherwise.

In this paper, we present an algorithm that requires a polynomial number of states per agent and solves the collision detection problem with probability one in sub-linear parallel time, both with high probability and in expectation. To the best of our knowledge, this algorithm is the first to solve the collision detection problem using a polynomial number of states within sublinear parallel time, affirmatively answering the question raised by Burman, Chen, Chen, Doty, Nowak, Severson, and Xu [PODC 2021] for the first time.
\end{abstract}

\section{Introduction}
\label{sec:introduction}
In this paper, we explore the \emph{population protocol} model, introduced in 2004 and studied extensively since then \cite{AAD+06,AAE08,AG15,AAG18,GS18,GSU18,SOK+20,SOI+20,BGK20,SEI+21}. The model consists of a network, or \emph{population}, of $n$ state-machines, referred to as \emph{agents}. At each time step, a pair of agents is selected uniformly at random to engage in an \emph{interaction} (\ie, pairwise communication), during which they update their states.
Agents are anonymous, \ie, they lack unique identifiers.
In the population protocol model, time complexity is commonly measured in \emph{parallel time}, defined as the number of time steps divided by the population size $n$ (the number of agents). In practice, interactions often occur simultaneously across the population; parallel time approximately captures this behavior. Throughout the remainder of this section (\ie, Section \ref{sec:introduction}), we assume parallel time when discussing time complexity.

In this paper, we study the collision detection problem in the population protocol model.
Before presenting our main results, we briefly explain prior studies on problems closely related to collision detection, namely self-stabilizing leader election and self-stabilizing ranking.

Leader election has been extensively studied in the population protocol model. Leader election can be solved by a simple two-state protocol \cite{AAD+06}, where initially, all agents are leaders. The protocol employs only one transition rule: when two leaders meet, one of them becomes a follower (\ie, a non-leader). This simple protocol elects a unique leader in $O(n)$ parallel time.
Furthermore, this protocol is time-optimal: Doty and Soloveichik \cite{DS18} demonstrated that any constant-space protocol requires linear time to elect a unique leader. Later, in 2015, Alistarh and Gelashvili \cite{AG15} developed a leader election protocol that converges in $O(\log^3 n)$ parallel time and uses $O(\log^3 n)$ states per agent.
Subsequently, numerous papers have focused on fast leader election, including \cite{AAG18,GS18,GSU18,SOI+20,BGK20}. G{\k{a}}sieniec, Staehowiak, and Uznanski \cite{GSU18} developed an algorithm that converges in $O(\log n \log \log n)$ time and uses a surprisingly small number of states: only $O(\log \log n)$ per agent. This is considered space-optimal because it is established that every leader election protocol requiring $O(n/\mathrm{polylog}(n))$ time also requires $\Omega(\log \log n)$ states \cite{AAE+17}. 
Sudo \etal~\cite{SOI+20} presented a simple protocol that elects a unique leader within $O(\log n)$ time and utilizes $O(\log n)$ states per agent. This is time-optimal, as any leader election protocol requires $\Omega(\log n)$ time, even if it uses an arbitrarily large number of states and the agents know the exact size of the population \cite{SM20}. (This lower bound may appear obvious, yet it does not directly follow from a simple coupon collector argument because we can specify an initial configuration where all agents are followers.)
Finally, in 2020, Berenbrink \etal~\cite{BGK20} provided a time and space-optimal protocol, \ie, an $O(\log n)$-time and $O(\log \log n)$-states leader election protocol.

Self-stabilizing leader election (SS-LE) has garnered significant attention within this model. This variant of leader election stipulates that (i) starting from any configuration, the population must reach a safe configuration where exactly one leader exists; and (ii) once a safe configuration is reached, the unique leader must be maintained indefinitely. These conditions ensure tolerance against finitely many transient faults, which is critical since many protocols (both self-stabilizing and non-self-stabilizing) assume the presence of a unique leader. Consequently, SS-LE is essential for enhancing the fault tolerance of the population protocol model itself.
However, no protocol can solve SS-LE unless each agent in the population knows the \emph{exact size} $n$ of the population \cite{FJ06,CIW12}\footnotemark{}. 
\footnotetext{ 
Strictly speaking, the cited works proves a slightly weaker impossibility. Nevertheless, this impossibility can be proved using a similar technique: a simple partitioning argument. See \cite{SOK+20} for details (page 618, footnote).
}

Numerous studies have focused on overcoming this impossibility by employing various strategies, including assuming oracles \cite{BBB13,FJ06,CP07}, assuming that agents precisely know the population size $n$ \cite{CIW12,BCC+21}, restricting the topology \cite{AAF+08,CC19,CC20,YSM21,YSO+23}, or slightly relaxing the requirements of SS-LE~\cite{Izu15,SOK+14,SMD+16,SNY+12,SOK+20det,SOK+18,SOK+20,SSN+21,SEI+21}. Among these, all algorithms that adopt the $n$-knowledge approach \cite{CIW12,BCC+21} elect the unique leader by solving a more general problem, called \emph{self-stabilizing ranking}. This problem stipulates that: (i) each agent $v$ maintains an output variable $v.\rank \in \{1,2,\dots,n\}$,
(ii) starting from any configuration, the population must reach a safe configuration where no two agents share the same $\rank$ value; and (iii) once a safe configuration is reached, no agent updates its $\rank$. SS-LE can be straightforwardly reduced to self-stabilizing ranking because, once ranking is achieved, there is exactly one agent with $\rank=1$. This agent can thus be regarded as the unique leader.
Cai, Izumi, and Wada~\cite{CIW12} present an algorithm that solves the self-stabilizing ranking problem within $O(n^2)$ parallel time, using $n$ states per agent. In contrast, Burman, Chen, Chen, Doty, Nowak, Severson, and Xu~\cite{BCC+21} introduce significantly faster algorithms that, however, require more states. Specifically, they offer an $O(n)$ parallel time algorithm with $O(n)$ states per agent and an $O(\log n)$ parallel time algorithm with a super-exponential number of states. This leads to a natural question: Can self-stabilizing ranking be solved in sublinear parallel time using only a polynomial number of states per agent?
To solve self-stabilizing ranking, agents must detect whether any distinct agents share the same rank. Consequently, Burman \etal\ raise an open question: Is there a sublinear parallel time algorithm with $\mathit{poly}(n)$ states that can solve the following problem, which we refer to as the \emph{collision detection problem} in this paper (See Section \ref{sec:prodef} for the formal definition
of this problem):
\begin{quote}
Each agent $v$ is given an input $\rank \in \{1,2,\dots,n\}$. The goal for each agent is to decide whether at least one pair of agents have the same input value in $\rank$.
\end{quote}

\subsection*{Our Contribution}
In this paper, we affirmatively answer the open question raised by Burman \etal~\cite{BCC+21}. Specifically, we introduce a collision detection algorithm that stabilizes within $O(\sqrt{n \log n})$ parallel time in expectation and $O(n^{1/2} \cdot \log^{3/2} n)$ parallel time with high probability. This algorithm uses $\tilde{O}(n)$ states per agent with high probability, excluding the input variable $\rank$, which requires $O(n)$ states.
The proposed algorithm is always correct; that is, it eventually reaches a stable configuration where all agents output the correct answer with probability 1.


It remains an open question whether self-stabilizing ranking, or weaker variants such as loosely-stabilizing ranking, can be solved within sublinear parallel time using a polynomial number of states per agent. This question persists primarily because our collision detection protocol is not self-stabilizing.

\subsection*{Organization of This Paper}
Section 2 introduces the preliminaries, including key terminologies and the definition of the model. Section 3 describes the basic submodules that are instrumental in designing the proposed algorithm $\cold$. Section 4 presents $\cold$, proves its correctness, and bounds its time and space complexities.

In the remainder of this paper, we will not use parallel time; instead, we will discuss stabilization time in terms of the number of time steps (or interactions).

\section{Preliminaries}
\label{sec:preliminaries} 


Throughout this paper,
we denote the set of \emph{non-negative} integers by $\mathbb{N}$
and represent the set
$\{z \in \mathbb{N} \mid x \leq z \leq y\}$ 
by $[x,y]$.
When the base of a logarithm is omitted, it is assumed to be 2.
For any positive integer $i$, we denote the $i$-th projection map by $\pi_i$,
\ie, for any element $\mathbf{x} = (x_1, x_2, \dots, x_k) \in X_1 \times X_2 \times \dots \times X_k$,
we define $\pi_i(\mathbf{x}) = x_i$.
The $\tilde{O}$-notation hides a poly-logarithmic factor; that is, for any function $f(x)$,
$\tilde{O}(f(x))$ denotes $O(f(x) \cdot \log^c x)$ for some constant $c$.

\subsection{Model}
\label{sec:model}

A \emph{population} is a network consisting of {\em agents}.
We denote the set of all the agents by $V$ and let $n = |V|$.
We assume that a population is a complete graph,
thus every pair of agents $(u,v)$ can interact,
where $u$ serves as the \emph{initiator}
and $v$ serves as the \emph{responder} of the interaction.

A \emph{protocol} $P(Q,\sinit,X,Y,T,\outputs)$ consists of 
a set $Q$ of states,
the initial state $\sinit$, 
a set $X$ of input symbols, 
a set $Y$ of output symbols, 
a transition function
$T:  (Q\times X)^2 \to Q^2$,
and an output function $\outputs : Q \times X \to Y$.
Initially, all agents are in state $\sinit \in Q$.
When two agents interact,
$T$ determines their next states
according to their current states and inputs.
The \emph{output} of an agent is determined by $\outputs$:
the output of an agent in state $q \in Q$ and input $x \in X$ 
is $\outputs(q,x)$.
A protocol may be given design parameters, such as a lower bound $\nlower$ and an upper bound $\nupper$ on the population size $n$. 
In this case, each component of the protocol, namely $Q$, $\sinit$, $X$, $Y$, $T$, and $\outputs$, may depend on these parameters.

A \emph{configuration} is a mapping $C : V \to Q\times X$ that specifies the states and the inputs
of all agents in the population. 
We say that a configuration $C$ \emph{changes} to another configuration $C'$ via an interaction $e = (u,v)$, 
denoted by $C \stackrel{P,e}{\to} C'$, 
if $(\pi_1(C'(u)), \pi_1(C'(v))) = T(C(u), C(v))$, $\pi_2(C'(u))=\pi_2(C(u))$, $\pi_2(C'(v))=\pi_2(C(v))$,
and $C'(w) = C(w)$ for all $w \in V \setminus \{u, v\}$.\footnote{
In this paper, we assume that the inputs provided to the agents do not change.
}
We say that a configuration $C'$ is reachable from a configuration $C$
if there is a sequence of configurations $C = C_0, C_1, \dots, C_k$ such that 
$C_i$ changes to $C_{i+1}$ via some interaction for each $i \in [0, k-1]$.
A configuration $C$ is \emph{stable} if, for every configuration $C'$ reachable from $C$,
the output of each agent remains the same in both $C$ and $C'$,
\ie, $\outputs(C(v)) = \outputs(C'(v))$.


We assume the \emph{uniformly random scheduler} $\rs$, which
selects two agents to interact at each step uniformly at random from all ordered pairs of agents.
Specifically, $\rs=\Gamma_0, \Gamma_1,\dots$
where each $\Gamma_t$ is a random variable such 
that $\Pr(\Gamma_t = (u,v)) =\frac{1}{n(n-1)}$ 
for any $t \ge 0$ and any distinct $u,v \in 
V$. 
The \emph{execution} of protocol $P$ starting from a configuration $C_0$
under the uniformly random scheduler $\rs$
is defined as the sequence of configurations $\Xi_{P}(C_0,\rs) = C_0,C_1,\dots$ such that
$C_t \stackrel{P,\Gamma_t}{\to} C_{t+1}$ holds for all $t \ge 0$. 
Note that each $C_i$ is also a random variable.

In this paper, we use the notation $v.\var$ to denote the value of a variable $\var$ maintained by an agent $v$. 
By abuse of notation,
we use $C(v).\var$ to represent the value of $\var$ in a configuration $C$.

\subsection{With High Probability}
\label{sec:whp}
In this paper, we frequently use the term ``with high probability,'' abbreviated as w.h.p. 
Various definitions for this term exist, such as with probability $1-n^{-1}$ or with probability $1-n^{\Omega(1)}$. 
Following \cite{BEF21}, we adopt the below definition.

\begin{definition}[with high probability]
\label{def:whp}
A property of a protocol $P$ holds \emph{with high probability}
if, for any constant $\eta = \Theta(1)$, we can ensure
the property is satisfied with a probability of $1-O(n^{-\eta})$
by adjusting the constant parameters of $P$, if necessary.
\end{definition}


\noindent 
For example, if a protocol $P_{\tau}$ with a design parameter $\tau=\Theta(1)$
elects a leader in $\tau \cdot n \log n = O(n \log n)$ time steps with probability $1-n^{-\tau}$,
we say that $P_{\tau}$ elects a leader in $O(n\log n)$ time steps w.h.p.
In this case, we can also explicitly state the constant factor of the running time and say that
$P_{\tau}$ elects a leader in $\tau \cdot n \log n$ time steps w.h.p.
Note that the requirement for the success probability (i.e., $1-O(n^{-\eta})$) is defined asymptotically;
therefore, we only need to consider sufficiently large population sizes $n$.


\subsection{Collision Detection Problem}
\label{sec:prodef}


The collision detection problem requires a protocol $P = (Q, \sinit, X, Y, T, \outputs)$ to accept the input symbol $[1,n]$, \ie, $[1,n] \subseteq X$, where $n = |V|$ is the population size. Each agent $v$ is assigned an input symbol in $[1,n]$, denoted by $v.\rank$.
A configuration $C : V \to Q \times X$ is termed \emph{initialized} if, for every $v \in V$, $v$ is in the initial state $\sinit$ and its input $v.\rank$ is within $[1,n]$ in $C$.
More formally, $C$ is initialized if $\pi_1(C(v)) = \sinit$ and $\pi_2(C(v)) \in [1,n]$ for all $v \in V$.
We define $\cinit(P)$ as the set of these initialized configurations.
The goal of this problem is to detect whether or not a \emph{collision} exists, \ie, $u.\rank = v.\rank$ for some distinct agents $u, v \in V$. Each agent must output either $1$ or $0$, representing ``there is at least one collision among the ranks of the agents'' and ``there is no such collision,'' respectively. For simplicity, we assume every protocol $P$ includes a variable $\collision \in \{0,1\}$, and an agent $v$ outputs $1$ if and only if $v.\collision = 1$.
That is, $\outputs(C(v)) = v.\collision$ for any configuration $C$ and any agent $v \in V$.

\begin{definition}[Collision Detection Problem]
Let $P=(Q,\sinit, X,Y,T,\outputs)$ be a protocol, where $[1,n] \subseteq X$. A protocol $P$ \emph{solves} the collision detection problem if for any initialized configuration $C_0 \in \cinit(P)$, both of the following conditions are satisfied:
\begin{itemize}
\item
For any stable configuration $C$ reachable from $C_0$, 
$C(v).\collision = 1$ for all $v \in V$ if $C_0(u).\rank = C_0(w).\rank$ for some distinct agents $u,w \in V$; otherwise, $C(v).\collision = 0$  holds for all $v \in V$.
\item The execution $\Xi_{P}(C_0,\rs)$ eventually reaches a stable configuration with probability 1.
\end{itemize}
\end{definition}

For any execution $\Xi = C_0, C_1, \dots$, we define the \emph{stabilization time} of $\Xi$
as the minimum $t$ such that $C_t$ is a stable configuration. 
We say that a protocol $P$ \emph{stabilizes} within $t$ time steps with high probability (respectively, in expectation)
if, for any initialized configuration $C_0 \in \cinit(P)$,
the stabilization time of $\Xi_{P}(C_0,\rs)$ is at most $t$ with high probability (respectively, in expectation).

\subsection{Uniform Protocols}
\label{sec:uniform}
A protocol is \emph{uniform} if it does not depend on the population size at all,
that is, if it does use any knowledge on the population size.
In the field of population protocols, originally, 
the number of states of a protocol $\mathcal{P}=(Q,\rho,Y,\pi,\delta)$ is simply defined as $|Q|$.
Thus, the number of states for uniform protocols must be either $\Theta(1)$ or infinite.
Some problems inherently require memory space of a non-constant size.\footnote{
For example, Doty and Soloveichik \cite{DS18} proved that $\omega(1)$ states are necessary to solve the leader election problem within $o(n^2)$ expected interactions under the uniformly random scheduler.
}
This means that uniform protocols devoted for those problems always require an infinite number of states by definition,
while each execution of these protocols may use only small number of states depending on population size. 
(A uniform protocol must be defined independently from the population size, but of course, its execution may depend on
the population size.)
Thus, we require another way to evaluate the number of states for uniform protocols as several works do
\cite{CMN+11,BKR19,DE19,BEF21}. This paper adopts the following simple definition of \emph{the number of states} of
a protocol $P=(Q,\sinit, X,Y,T,\outputs)$. 
\begin{itemize}
    \item An agent maintains a constant number of variables $x_1, x_2, \dots, x_s$, and the combination of their values constitutes the state of the agent. The first element $Q$ of protocol $P=(Q,\sinit, X,Y,T,\outputs)$ can be regarded as the set of all such states.
    The domain of each variable may not be bounded, thus $Q$ may be an infinite set.    
    \item For any agent $v \in V$, define $v$'s amount of information at time step $t$ as $f(v,t)=\sum_{i=1}^s \iota_i$, where each $\iota_i$ is the number of bits required to encode the value of variable $x_i$. The number of states of an execution $\Xi$ is defined as $\#(\Xi) = \max\{2^{f(v,t)} \mid v \in V, t=0,1,\dots\}$ in the execution.
    We say that the number of states of a protocol $P$
    is at most $z$ with high probability (respectively, in expectation) if 
    for any initialized configuration $C_0 \in \cinit(P)$,
    $\#(\Xi_{P}(C_0,\rs)) \le z$ holds with high probability (respectively, in expectation).
\end{itemize}

\section{Tools}
\label{sec:tools}
\begin{algorithm}[t]
\caption{$\epidemic(\var)$
at an interaction where
$\aaa$ and $\bbb$ are an initiator and a responder, respectively.
} 
\label{al:epidemic}
$\bbb.\var \gets \max(\aaa.\var,\bbb.\var)$
\end{algorithm}
\subsection{One-way Epidemic}
The one-way epidemic protocol was introduced and analyzed by Angluin, Aspnes, and Eisenstat \cite{AAE08} and has been widely used thereafter. 
The goal of this protocol is to propagate the maximum value of a given variable $\var$ to all agents.
The strategy is simple: when an initiator $a$ and a responder $b$ meet, and if $a.\var > b.\var$ holds,
$b.\var$ is updated to $a.\var$. (See Algorithm \ref{al:epidemic}.)
Angluin \etal~\cite{AAE08} prove the following useful lemma,
which guarantees that
the maximum value is propagated to the entire population
in $O(n \log n)$ steps w.h.p.

\begin{lemma}
\label{lem:epidemic}
Suppose that 
an execution of the protocol $\epidemic(\var)$
starts from a configuration $C$ where $M = \max_{v \in V} v.\var$.
Then, for any fixed $\eta > 0$, 
there exists a constant $d$ such that 
for sufficiently large population size $n$, 
with probability $1-n^{-\eta}$,
every agent $v \in V$ satisfies $v.\var \ge M$
within $d n \log n$ time steps.
\end{lemma}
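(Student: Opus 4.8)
The plan is to reduce the claim about propagating the maximum of $\var$ to the following cleaner question: starting from a configuration in which at least one agent holds the value $M$, how long until every agent holds at least $M$? Since $\epidemic(\var)$ only ever overwrites a value with a larger one, and $M$ is the global maximum, the set of agents currently holding a value $\ge M$ is exactly the set of agents holding $M$, and this set is monotonically non-decreasing over time. So I can ignore all other values entirely and treat this as a pure rumor-spreading process: an agent is ``infected'' if its $\var$ is $\ge M$, initially at least one agent is infected, and an interaction infects the responder $b$ whenever the initiator $a$ is infected. This is precisely the classical one-way epidemic, and the lemma is just the standard $O(n\log n)$-w.h.p. bound for it.

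To prove the epidemic bound itself, I would partition the execution into two phases by the number of infected agents $k$. In the growth phase ($1 \le k \le n/2$), consider the number of interactions needed to increase $k$ by one. Given $k$ infected agents, the probability that a uniformly random ordered pair $(a,b)$ has $a$ infected and $b$ uninfected is $\frac{k(n-k)}{n(n-1)} \ge \frac{k}{2n}$. Hence the expected number of steps to go from $k$ to $k+1$ infected is at most $\frac{2n}{k}$, and by a Chernoff/concentration argument (sum of geometric random variables), with high probability the total number of steps to reach $k = n/2$ is $O(n \log n)$, since $\sum_{k=1}^{n/2} \frac{2n}{k} = O(n \log n)$. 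The symmetric ``shrinking'' phase ($n/2 \le k \le n-1$) tracks the number $j = n-k$ of uninfected agents: the probability a random interaction infects one of them is $\frac{(n-j)j}{n(n-1)} \ge \frac{j}{2n}$ for $j \le n/2$, again giving expected $O(n/j)$ steps per decrement and $O(n\log n)$ total w.h.p.

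The remaining work is to turn these per-step expectations into a high-probability bound of the form $1 - n^{-\eta}$ with a constant $d = d(\eta)$, matching Definition \ref{def:whp}. For this I would use a standard tail bound for sums of independent geometric random variables (for instance, the bound that a sum of independent geometrics with success probabilities $p_k$, having mean $\mu = \sum 1/p_k$, exceeds $c\mu$ with probability at most $e^{-\Omega(c)}$ for $c$ large, or more carefully a concentration inequality that handles the heterogeneous success probabilities). Choosing the constant $d$ large enough in terms of $\eta$ then drives the failure probability below $n^{-\eta}$. One subtlety is that the ``time to increase $k$ by one'' random variables are only conditionally geometric given the current infected set, but since the success probability lower bound $\frac{k}{2n}$ (resp. $\frac{j}{2n}$) holds deterministically regardless of \emph{which} agents are infected, I can stochastically dominate the true stopping time by a sum of genuinely independent geometric variables, which is exactly the setting the tail bound needs.

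I expect the main obstacle to be the concentration step: getting a clean $1-n^{-\eta}$ tail for a sum of $\Theta(n)$ heterogeneous geometric random variables whose means range from $O(1)$ up to $\Theta(n)$. The largest individual terms (when $k$ or $j$ is small, the per-increment expectation is $\Theta(n)$) dominate the variance, so a naive Chernoff bound on the sum is not immediately tight; one has to either split off the first $O(\log n)$ increments and handle them separately, or invoke a tailored concentration inequality for weighted sums of exponential/geometric variables. Everything else — the monotonicity reduction, the per-step probability estimates, and assembling the two phases — is routine. In fact, since Lemma \ref{lem:epidemic} is quoted verbatim from Angluin, Aspnes, and Eisenstat \cite{AAE08}, I would most likely just cite their analysis rather than reproduce it, and only sketch the reduction from ``maximum propagation'' to ``epidemic spreading'' described in the first paragraph.
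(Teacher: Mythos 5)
The paper gives no proof of this lemma at all --- it is quoted directly from Angluin, Aspnes, and Eisenstat \cite{AAE08} --- so your closing remark that one would simply cite their analysis is exactly what the paper does. Your sketch (monotone reduction to a one-way infection process, two phases split at $k=n/2$, per-increment success probability at least $k/2n$ resp.\ $j/2n$, stochastic domination by independent geometrics, and a tail bound such as Janson's inequality for heterogeneous geometric sums to get the $1-n^{-\eta}$ guarantee with $d=d(\eta)$) is the standard and correct argument underlying that cited result, so there is nothing to object to.
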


\begin{algorithm}[t]
\caption{$\phaseclock(F)$
at an interaction where
$\aaa$ and $\bbb$ are an initiator and a responder, respectively.
} 
\label{al:phaseclock}
\Variables{$\timer \in [0,m-1], \epoch\in [0,F], \leader \in \{0,1\}$}
\Note{$m$ is a sufficiently large constant. (See Lemma \ref{lem:phase_clock}.)}
\Notation{
$\max_m(x,y)=
\begin{cases}
    x&  x \in [0,y-\lceil m/2 \rceil] \cup [y+1,y+\lfloor m/2 \rfloor]\\
    y& \textbf{otherwise}
\end{cases}$
}
\uIf{$\bbb.\leader = 1 \land \aaa.\timer = \bbb.\timer$}{
$\bbb.\timer \gets (\bbb.\timer + 1) \bmod m$\;
\If{$\bbb.\timer = 0$}{$\bbb.\epoch = \min(\bbb.\epoch+1,F)$\;}
}\ElseIf{
$\bbb.\leader=0$
}{
$\bbb.\timer \gets \max_m(\aaa.\timer,\bbb.\timer)$\;
}
Execute $\epidemic(\epoch)$\;

\end{algorithm}

\subsection{Phase Clock with a Leader}
Together with the epidemic protocol, Angluin \etal~\cite{AAE08}
introduce the phase clock protocol. 
This protocol requires the presence of a unique leader.
Specifically, it assumes that each agent maintains a variable $\leader \in \{0,1\}$,
and there exists an agent $a_L \in V$ such that
$a_L.\leader = 1$ and $b.\leader = 0$ for all $b \in V \setminus \{a_L\}$ consistently.
The goal of this protocol 
is allowing the leader $a_L$ to determine whether $\Theta(n \log n)$ time steps has passed or not.
In this paper, we modify this protocol slightly so that 
all agents are synchronized with a variable $\epoch \in [0,F]$,
where $F$ is a given (possibly non-constant) integer. 

In addition to $\epoch$, this protocol maintains a variable $\timer \in [0, m-1]$, where $m$ is a constant parameter.
Initially, $a_L.\timer = 0$ and $v.\timer = m-1$ for all $v \in V \setminus \{a_L\}$,
while the $\epoch$ value of all agents are $0$. 
In this protocol, only the responder updates its variables at each interaction.
Lines 2--8 in Algorithm \ref{al:phaseclock} specify how the responder updates $\timer$ and $\epoch$.
The leader $a_L$ increments its $\timer$ by one modulo $m$ when it encounters an initiator
with the same timer value. 
The timer of a non-leader $b$ is overwritten when
it encounters an initiator with a $\timer$ value from $b.\timer+1$
to $b.\timer + \lfloor m/2 \rfloor$ modulo $m$,
\ie, a $\timer$ value in $[0, y-\lceil m/2 \rceil] \cup [y+1, y+\lfloor m/2 \rfloor]$,
where $y=b.\timer$.
The leader $a_L$ increments its $\epoch$ by one when it increments its $\timer$ from $m-1$ to $0$
unless $a_L.\epoch$ has already reached the given maximum value $F$.
The increased $\epoch$ value is propagated to all agents via the epidemic protocol.

The following lemma directly follows from Lemma \ref{al:epidemic}
and the analysis of the phase clock protocol presented by Angluin \etal~\cite{AAE08}.

\begin{lemma}
\label{lem:phase_clock}
Let $C_0$ be a configuration with exactly one leader $a_L$, where 
the variables of all agents are initialized as specified above.
Let $\Xi_m = \Xi_{\phaseclock(F)}(C_0,\rs)$,
where $\phaseclock$ is the protocol specified in Algorithm \ref{al:phaseclock}
with the parameter $m$.
For any fixed constants $c, d_1, \eta \ge 0$,
there exist constants $m$ and $d_2$ such that,
for all sufficiently large population sizes $n$, 
the execution $\Xi_m$ satisfies the following properties with probability $1-n^{-\eta}$:
\begin{itemize}
\item For any $i \in [0,\min(n^c, F)]$, there are at least $d_1 \log n$ time steps where $v.\epoch=i$ holds for all agents $v$, and
\item For any $i \in [0,\min(n^c, F)-1]$, each agent $v$ maintains $v.\epoch = i$ for at most $d_2n \log n$ steps.
\end{itemize}
\end{lemma}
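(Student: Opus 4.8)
The plan is to isolate two ingredients and then glue them together: the first is about the $\timer$ dynamics and is, up to our inert $\epoch$-layer, exactly the phase clock analysis of Angluin, Aspnes, and Eisenstat \cite{AAE08}; the second is the epidemic spreading of $\epoch$, \ie, Lemma~\ref{lem:epidemic}. I would start by naming the relevant instants: for $i \ge 1$ let $R_i$ be the first step at which $a_L.\epoch$ becomes $i$ --- equivalently, the step at which $a_L.\timer$ completes its $i$-th wrap from $m-1$ to $0$ --- and set $R_0 = 0$. Because the evolution of $(\timer, \leader)$ does not depend on $\epoch$ (the $\epoch$ updates are purely downstream of $\timer$), the $\timer$ process here is literally the original phase clock, and I would quote from \cite{AAE08} that, for the given $c$ and $\eta$, there are functions $c_1(m), c_2(m)$ with $c_1(m) \to \infty$ as $m \to \infty$ such that, for every sufficiently large constant $m$, with probability at least $1 - \tfrac13 n^{-\eta}$ one has $c_1(m)\, n\log n \le R_i - R_{i-1} \le c_2(m)\, n\log n$ for all $i \in [1, \min(n^c+1, F)]$. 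The reason a large $m$ helps is twofold: each of the $m$ timer-increments comprising one epoch costs $\Theta(n\log n)$ steps --- right after it increments, the leader is the only agent carrying its current $\timer$ value and must wait, through the cyclic-max rule $\max_m$, for that value to reach enough agents before an initiator can match it --- so a large $m$ makes an epoch outlast the time a one-way epidemic needs to complete; and a large $m$ supplies the concentration needed to union-bound over the up to $\Theta(n^c)$ epochs in play.

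For the second ingredient I would first note that the leader is the only source of new $\epoch$ values: every interaction performs $b.\epoch \gets \max(a.\epoch, b.\epoch)$ on the responder, and the sole rule that introduces a value not already present is the leader's increment at a timer wrap. Hence $C_t(a_L).\epoch = \max_{v \in V} C_t(v).\epoch$ at every step $t$, every agent's $\epoch$ is nondecreasing, and in particular no agent reaches $\epoch \ge i$ before step $R_i$. Letting $d$ be the constant of Lemma~\ref{lem:epidemic} for reliability exponent $c + \eta + 2$, I would also pick the $m$ above large enough that $c_1(m) > d$. Since $R_i$ is a stopping time, the configuration at $R_i$ has maximum $\epoch$-value exactly $i$, and from $R_i$ onward the $\epoch$-coordinate pointwise dominates a one-way epidemic started there (later leader increments only push $\epoch$ up further); so Lemma~\ref{lem:epidemic}, applied to that epidemic and combined with a union bound over the $\le n^c$ values of $i$, gives with probability at least $1 - \tfrac13 n^{-\eta}$ that every agent has $\epoch \ge i$ by step $R_i + d\, n\log n$, for all $i \in [1, \min(n^c, F)]$.

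On the intersection of the two events, of probability at least $1 - n^{-\eta}$, both conclusions reduce to bookkeeping. For the first, with $i \ge 1$: by step $R_i + d\, n\log n$ all agents have $\epoch \ge i$, while up to step $R_{i+1}$ none has $\epoch > i$ (the leader is the unique source and is still at $i$), so every agent holds $\epoch = i$ throughout $[R_i + d\, n\log n, R_{i+1})$, an interval of length at least $(c_1(m) - d)\, n\log n \ge d_1 \log n$ for large $n$; the case $i = 0$ uses $[0, R_1)$ directly since all agents start at $\epoch = 0$, and $i = F$ uses that the leader's $\epoch$ is then frozen at $F$. For the second, with $i \le \min(n^c, F) - 1$ (so $i+1 \le F$ and the leader does reach $\epoch = i+1$): an agent can hold $\epoch = i$ only inside $[R_i, R_{i+1} + d\, n\log n)$ --- not before $R_i$ by the previous paragraph, and by step $R_{i+1} + d\, n\log n$ the epidemic of value $i+1$ has reached it --- an interval of length at most $(R_{i+1} - R_i) + d\, n\log n \le (c_2(m) + d)\, n\log n$, so $d_2 := c_2(m) + d$ works.

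The step I expect to be the main obstacle is the first ingredient --- extracting from \cite{AAE08} that one epoch lasts between $c_1(m)\, n\log n$ and $c_2(m)\, n\log n$ steps with $n^{-\eta}$-scale reliability simultaneously over all $\Theta(n^c)$ epochs, and in particular that $c_1(m)$ can be pushed past the epidemic constant $d$ by enlarging the constant $m$. Everything downstream is routine once Lemma~\ref{lem:epidemic} is available. Since the $\timer$ process is precisely the phase clock of Angluin, Aspnes, and Eisenstat and our $\epoch$ layer is inert with respect to it, this is exactly what their analysis provides, which is why the lemma is asserted as a consequence of \cite{AAE08} and Lemma~\ref{lem:epidemic} rather than proved from scratch.
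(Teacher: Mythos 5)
Your proposal is correct and follows essentially the same route as the paper, which offers no written proof but asserts the lemma as a direct consequence of the Angluin--Aspnes--Eisenstat phase clock analysis (for the $\timer$ dynamics and the $\Theta(n\log n)$ spacing of the leader's epoch increments) combined with Lemma~\ref{lem:epidemic} (for propagating each new $\epoch$ value); your decomposition into these two ingredients, the observation that the leader is the unique source of new $\epoch$ values, and the choice of $m$ large enough that an epoch outlasts the epidemic constant $d$ are exactly the intended argument. The only point worth flagging is that you correctly identify the load-bearing step --- importing from \cite{AAE08} the simultaneous two-sided bounds on $R_i - R_{i-1}$ over polynomially many epochs with tunable $c_1(m)$ --- which is precisely the part the paper also delegates to the cited analysis rather than reproving.
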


By this lemma, as long as $F=\poly(n)$, 
we can ensure that with high probability, all agents simultaneously experience each of the epochs $0, 1, \dots, F$ during an arbitrarily large $\Omega(n \log n)$ time steps with a controllable constant factor $d_1$,
while each epoch $0,1,\dots,F-1$ finishes in $O(n \log n)$ time steps.



\section{Collision Detection}
In this section, we present a collision detection protocol $\cold$, which
stabilizes within $O(n^{3/2} \cdot \sqrt{\log n})$ time steps (respectively, $O(n^{3/2} \cdot \log^{3/2} n)$ time steps) in expectation
(respectively, with high probability) and uses $\tilde{O}(n)$ states per agent with high probability, excluding the input variable $\rank$, which requires $O(n)$ states.
This protocol is always correct,
\ie, it eventually reaches a stable configuration where all agents output the correct answer
with probability $1$.

In Section \ref{sec:without}, we present a collision detection protocol under the assumption that (i) there is a pre-elected leader, and (ii) all agents are aware of a common lower bound and a common upper bound on $n$, both of which are asymptotically tight. In Section \ref{sec:with}, we eliminate these assumptions.

\subsection{Collision Detection with a Leader and Rough knowledge of $n$}
\label{sec:without}
In this section, we present a protocol $\coldbfull(\nlower, \nupper)$,
abbreviated as $\coldb(\nlower, \nupper)$.
This protocol requires that there is always a unique leader $a_L \in V$
and that
the two arguments $\nlower$ and $\nupper$ are asymptotically tight lower and upper bounds on $n$.
Specifically, $c_1 n \le \nlower \leq n \leq \nupper \le c_2 n$ must hold
for some constants $c_1$ and $c_2$.
We assume that the agents are aware of these constants $c_1$ and $c_2$, allowing us to incorporate them into the design of this protocol.

\begin{algorithm}[t]
\caption{$\coldb(\nlower,\nupper)$ at an interaction where
$\aaa$ and $\bbb$ are an initiator and a responder, respectively.
} 
\label{al:coldb}
\If{$\aaa.\rank = \bbb.\rank$}{$\bbb.\collision \gets 1$\;}
Let $r = \lceil 3 \eta \log \nupper \rceil$, $\ell = \left \lceil \sqrt{\nlower \log \nlower} \right \rceil$, and $z = \lceil \nupper / \ell \rceil$\;
Execute $\phaseclock(r\cdot z +1)$\;
\If{$\bbb.\epoch$ has increased in $\phaseclock$}{
Let $k=(\bbb.\epoch-1)\bmod z$\;
\uIf{$\bbb.\rank \in [k\cdot \ell+1,(k+1)\cdot \ell]$}{
Choose $\chi$ uniformly at random from $\{0,1\}$\;
$\bbb.\grid \gets (\bbb.\rank - k\cdot \ell,\chi)$\;
$\bbb.\children \gets \childmax$\;
}\Else{
$\bbb.\grid \gets \bot$\;
$\bbb.\children \gets 0$\;
}
}
\If{
$\aaa.\epoch = \bbb.\epoch > 0$
}{
\If{$\aaa.\grid \neq \bot \land \aaa.\children > 0 \land \bbb.\grid = \bot$}{
$\bbb.\grid \gets \aaa.\grid$\;
$\aaa.\children \gets \bbb.\children \gets \aaa.\children-1$
}
\If{$\exists x,y,z: \aaa.\grid=(x,y) \land \bbb.\grid=(x,z) \land y\neq z$}{
$\bbb.\collision \gets 1$\;
}
}
Execute $\epidemic(\collision)$
\end{algorithm}

The pseudocode of this protocol is presented in Algorithm \ref{al:coldb}.
The algorithm consists of three parts.
The first part, referred to as \emph{the backup protocol}, detects collisions through direct interactions (Lines 9--10):
when two agents with the same rank meet, the responder $\bbb$ raises a collision flag,
\ie, $\bbb.\collision$ is set to $1$.
If only one pair of agents $u, v \in V$ shares a common rank,
it can be easily observed that this process requires $\binom{n}{2} = \Theta(n^2)$ time steps in expectation.
The second part detects collisions much more quickly, requiring $O(n^{3/2} \sqrt{\log n})$ time steps in expectation
and $O(n^{3/2} \log^{3/2} n)$ time steps with high probability, using a square root decomposition technique (Lines 11-27).
Both the first and second parts allow \emph{some} agent to raise a collision flag if a collision is detected.
The third part propagates the raised flag to the entire population via the epidemic protocol (Lines 28).

Although the backup protocol is much slower,
it is still necessary because the second part may fail to detect a collision with non-zero probability.
Moreover, the backup protocol assists the second part in quickly detecting collisions.
Specifically, we present the following lemma:
\begin{lemma}
\label{lem:not_so_many_colliding_pairs}
Let $C_0$ be a configuration where there are $x = \Omega(\sqrt{n\log n})$ pairs of colliding agents,
\ie, $|\{(u,v) \in V^2 \mid  u \neq v, u.\rank = v.\rank\}| = \Omega(\sqrt{n\log n})$ holds in $C_0$.
Then, in an execution of the backup protocol starting from $C_0$,
at least one agent raises a collision flag within $O(n^{3/2})$ time steps, both in expectation and with high probability.
\end{lemma}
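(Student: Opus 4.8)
The plan is to show that when there are $x = \Omega(\sqrt{n\log n})$ ordered colliding pairs, the backup protocol (Lines 9--10 of Algorithm~\ref{al:coldb}) sees at least one such pair interact within $O(n^{3/2})$ time steps, both in expectation and w.h.p. The key observation is that the backup protocol needs only \emph{one} interaction between \emph{some} colliding pair to raise a flag. Since flags are never reset by the backup protocol and $\col$ is monotone, it suffices to bound the first hitting time of the event ``an interaction occurs between a colliding pair.'' I would first express this hitting time as a geometric-type random variable: at each time step, the scheduler picks an ordered pair uniformly among the $n(n-1)$ ordered pairs, and the set of colliding ordered pairs has size $x = \Omega(\sqrt{n\log n})$; hence the probability that a given step realizes a colliding pair is $p = \frac{x}{n(n-1)} = \Omega\!\left(\frac{\sqrt{n \log n}}{n^2}\right) = \Omega\!\left(\frac{\sqrt{\log n}}{n^{3/2}}\right)$.

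The next step is the standard bound. The first hitting time $T$ is geometrically distributed with success probability $p$ (the colliding-pair set is fixed by the input and does not change over time, since $\rank$ values are immutable — a fact I would cite from the model section and the footnote that inputs are fixed). Therefore $\ex[T] = 1/p = O\!\left(\frac{n^{3/2}}{\sqrt{\log n}}\right) = O(n^{3/2})$, which gives the expectation claim. For the high-probability claim, I would bound $\Pr(T > t) = (1-p)^t \le e^{-pt}$; choosing $t = \frac{\eta}{p}\ln n = O(n^{3/2})$ (recalling $1/p = O(n^{3/2})$ and absorbing the $\ln n$ factor into a larger constant, noting $\sqrt{\log n} \ge 1$) yields $\Pr(T > t) \le n^{-\eta}$, which establishes the w.h.p.\ statement per Definition~\ref{def:whp}. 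Finally I would note that once some colliding pair interacts, the responder sets $\bbb.\collision \gets 1$, so a collision flag is raised by time $T$; this completes the argument.

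I do not expect a serious obstacle here; the statement is essentially a coupon-collector/geometric-hitting-time bound. The one point requiring a small amount of care is the direction of the inequality in the constant: the problem is \emph{easier} the more colliding pairs there are, so the lower bound $x = \Omega(\sqrt{n\log n})$ on the number of colliding pairs is exactly what makes $p$ large enough for the $O(n^{3/2})$ bound; I should make sure to use $x \ge c\sqrt{n\log n}$ for some constant $c>0$ and track that $1/p \le \frac{n(n-1)}{c\sqrt{n\log n}} = O(n^{3/2}/\sqrt{\log n}) \subseteq O(n^{3/2})$. A second minor point is that the bound must hold for \emph{any} such $C_0$, including configurations where the backup-protocol portion of the state may already be partway along; but since raising the flag is monotone and only helps, starting from an arbitrary reachable-looking $C_0$ (as stated) does not weaken the bound. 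No additional structure of Algorithm~\ref{al:coldb} beyond Lines 9--10 is needed, and the phase-clock / square-root-decomposition machinery plays no role in this particular lemma.
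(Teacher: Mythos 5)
Your approach is exactly the paper's: the per\-/step probability that the scheduler realizes one of the $x$ colliding ordered pairs is $p = x/(n(n-1))$, the ranks never change, so the first flag\-/raising time is geometric with parameter $p$. Your computation $p = \Omega(\sqrt{\log n}/n^{3/2})$ and the expectation bound $1/p = O(n^{3/2}/\sqrt{\log n}) \subseteq O(n^{3/2})$ are correct (in fact more careful than the paper, which writes $x/\binom{n}{2} = \Omega(\log n/n^{3/2})$, overstating $p$ by a $\sqrt{\log n}$ factor).

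However, your high\-/probability step has a genuine flaw: you set $t = (\eta/p)\ln n$ and claim this is $O(n^{3/2})$ by ``absorbing the $\ln n$ factor into a larger constant.'' A $\ln n$ factor is not a constant; since $1/p = \Theta(n^{3/2}/\sqrt{\log n})$, your $t$ is $\Theta(n^{3/2}\sqrt{\log n})$, not $O(n^{3/2})$. Equivalently, with $t = d n^{3/2}$ for any constant $d$, the failure probability is $(1-p)^t = e^{-\Theta(d\sqrt{\log n})}$, which tends to $0$ but is not $O(n^{-\eta})$ for any constant $\eta > 0$, so Definition~\ref{def:whp} is not met at $O(n^{3/2})$ steps. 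To be fair, the paper's own proof has the same gap --- its w.h.p.\ conclusion rests precisely on the miscomputed $p = \Omega(\log n/n^{3/2})$ --- and the slack is harmless downstream, since the proof of Lemma~\ref{lem:coldb} only needs an $O(n^{3/2}\log^{3/2} n)$ w.h.p.\ bound, which your correct $O(n^{3/2}\sqrt{\log n})$ comfortably supplies. The honest fix is to state the w.h.p.\ conclusion as $O(n^{3/2}\sqrt{\log n})$ (or $O(n^{3/2}\log n)$) rather than $O(n^{3/2})$; everything else in your argument, including the monotonicity of $\collision$ and the immutability of the set of colliding pairs, is fine.
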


\begin{proof}
At each time step, one of the $x$ colliding pairs is selected to interact
with a probability of $x / \binom{n}{2} = \Omega(\log n / n^{3/2})$, resulting in the responder of the pair raising a collision flag. 
Therefore, a collision flag is raised within $O(n^{3/2} / \log n)$ steps in expectation,
and within $O(n^{3/2})$ steps with high probability.
 \end{proof}
\noindent
This lemma implies that, regardless of the performance of the second part, 
the proposed protocol solves the collision detection problem within $O(n^{3/2})$ time steps, both in expectation and with high probability, if the number of colliding pairs is $\Omega(\sqrt{n \log n})$.
Therefore, for the remainder of this section, we will assume that the number of colliding pairs is $o(\sqrt{n \log n})$.

\begin{figure}[t]
 \centering 
 \includegraphics[width=0.85\textwidth]{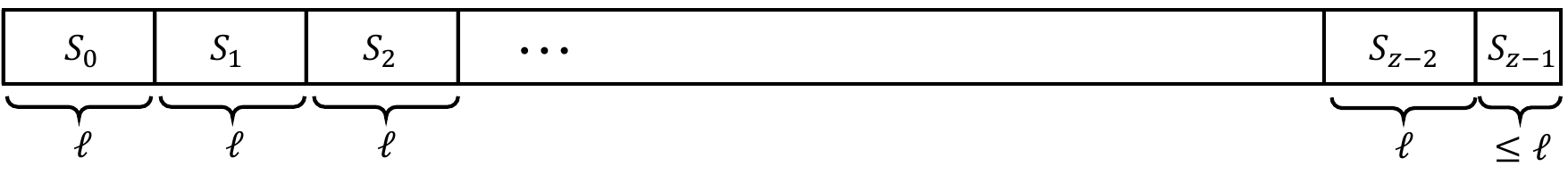}
 \caption{Segments of the second part of $\coldb(\nlower,\nupper)$} 
 \label{fig:segments}
\end{figure}

\begin{table}
\centering
\caption{Notations used in the second part of $\coldb(\nlower,\nupper)$}
\label{table:notations}
\begin{tabular}{c c c}
\hline
Notation & Definition & Explanation \\
\hline 
$\ell$ & $\lceil \sqrt{\nlower \log \nlower} \rceil$ & Length of a segment \\
$z$ & $\lceil \nupper/\ell \rceil$ & Number of segments \\
$S_k$ & $[k\cdot \ell + 1, (k+1)\cdot \ell] \cap [1, \nupper]$ & $k$-th segment
($0 \le k < z$)\\
$r$ & $\lceil 3 \eta \cdot \log \nupper \rceil$ & Number of epochs dedicated to each segment \\
\hline
\end{tabular}
\end{table}

We will now explain how the agents behave in the second part, which is the main focus of the proposed protocol.
Utilizing the unique leader assumption, the agents selected for each interaction
execute the phase clock protocol presented in Section \ref{sec:tools} (Line 12).
Let $\ell=\left \lceil \sqrt{\nlower \log \nlower} \right \rceil$
and $z = \lceil \nupper/\ell \rceil$.
In addition, let $r = \lceil 3 \eta \cdot \log \nupper \rceil$, where $\eta$ is a hidden design parameter ensuring that the targeted time complexity, \ie, $O(n^{3/2} \cdot \log^{3/2} n)$ time steps, are achieved with probability $1-O(n^{-\eta})$.
The argument $F$ of the phase clock protocol is set to
$r \cdot z + 1$. By Lemma \ref{lem:phase_clock}, it is guaranteed with high probability that
for each $i = 1, 2, \dots, r\cdot z$, 
all agents are simultaneously in epoch $i$
for at least $d_1 n \log n$ steps with any sufficiently large constant $d_1$,
and all agents enter the final epoch $r \cdot \ell + 1$ within
$\Theta(r z n \log n)=\Theta(n^{3/2} \log^{3/2} n)$ steps.
We divide the set of integers $[1,\nupper]$ into $z$ segments with at most size $\ell$,
\ie, for any $k=0,1,2,\dots,z-1$,
the $k$-th segment is $S_k = [k\cdot \ell+1, (k+1)\cdot \ell] \cap [1,\nupper]$
(Figure \ref{fig:segments}).
We assign $r$ epochs $k-1, k-1 + z, k-1 + 2z, \dots, k-1 + (r-1)z$ to
detect collisions of ranks in $S_k$.
Conversely, each epoch $i \in [1, r\cdot z]$ is dedicated to collision detection in $S_{(i-1) \bmod z}$.
The notations introduced here are summarized in Table \ref{table:notations}.

The goal of epoch $i \in [1,r\cdot z]$ is to detect whether or not there exists $x \in S_{(i-1) \bmod z}$ such that 
$|\{v \in V \mid v.\rank = x\}| \ge 2$,
in $O(n\log n)$ time steps with probability at least $1/3$ for a sufficiently large $n$, while 
ensuring that no agent mistakenly detect a collision if no such $x$ exists. 
By the achievement of this goal, all $r$ epochs devoted to each segment results in 
the correct decision with probability $1-(1/3)^r \ge 1-n^{-\eta}$.
To achieve this goal, each agent $v \in V$  maintains a variable $v.\grid \in ([1,\ell]\times \{0,1\}) \cup \{\bot\}$ and $v.\children \in [0,\childmax]$.
Suppose that an agent $v$ enters epoch $i$ now.
If $v.\rank \in S_{k}$, where $k=(i-1) \bmod z$, it generates a group identifier
$(v.\rank-k\cdot\ell, \chi)$, where $\chi$ is a random number (or \emph{nonce})
chosen uniformly at random in $[0,\lceil {\nupper}^{\eta/r}\rceil-1]$,
and stores it in $v.\grid$ (Lines 15--18). In addition, $v.\children$ is reset to the maximum value $\childmax$.
Otherwise, $v.\grid$ and $v.\children$ are set to $\bot$ and $0$, respectively
(Lines 19--21).
We say that an agent with $\grid=\bot$ a \emph{null} agent.
A non-null agent with $\children = x$ has ability to make $2^x$ copies of non-null agent:
whenever a non-null agent $a$ with $a.\children > 0$ and a null agent $b$ 
have an interaction as the initiator and the responder, respectively, 
$a.\grid$ is copied to $b.\grid$ and both $a.\children$ and $b.\children$
are set to $a.\children - 1$ (Lines 23--25).
If two agents have an interaction at which
their group identifier share a common rank,
but have different nonce, they notice the existence of a collision.
Then, the responder of the interaction raises a collision flag (Line 27).


The proliferation of group identifiers resembles the epidemic protocol \cite{AAE08}. However, unlike the epidemic protocol, this process limits the maximum number of non-null agents. This constraint
enhances the speed of proliferation, ensuring that proliferation completes in every epoch with high probability. This is formalized in the following lemma:

\begin{lemma}
\label{lem:proliferation}
Let $P = \coldb(\nlower, \nupper)$,
$i \in [1, r \cdot z]$, and $k = (i-1) \bmod z$.
Let $C_0 \in \cinit(P)$ be any initialized configuration
where there are at most $2 \cdot \ell$ agents whose rank is in $S_k$,
and let $\Xi_{P}(C_0, \rs) = C_0, C_1, \dots$.
Let $t_b$ be the minimum integer such that all agents are in epoch $i$ in the $t_b$-th configuration $C_{t_b}$. Then, with high probability,
there exists a non-negative integer $\tdone = t_b + O(n \log n)$ such that
for all $v \in V$, $v.\epoch = i$ and $v.\children = 0$ in $C_{\tdone}$.
\end{lemma}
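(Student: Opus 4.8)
The plan is to extract a single conserved quantity that simultaneously keeps the pool of ``null'' agents large and bounds the number of spawning events that still have to be completed, and then to show that every such event finishes along a short chain of (essentially) independent geometric waits, so that one union bound suffices. \textbf{Step 1: the infectivity invariant.} I would set $\Phi(C)=\sum 2^{C(v).\children}$, the sum over all $v$ with $C(v).\grid\neq\bot$ and $C(v).\epoch=i$. The copy rule of Lines~23--25 replaces an agent with $\children=\beta\ge 1$ by two agents with $\children=\beta-1$, so it changes $\Phi$ by $-2^{\beta}+2\cdot 2^{\beta-1}=0$; an agent entering epoch $i$ raises $\Phi$ by exactly $2^{\childmax}$ precisely when its rank lies in $S_k$ (Lines~14--18), which happens for at most $2\ell$ agents by hypothesis; and leaving epoch $i$ only removes mass. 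Hence $\Phi\le 2\ell\cdot 2^{\childmax}\le 2\ell\cdot\nlower/(4\ell)=\nlower/2\le n/2$ always, so from $t_b$ on, when all $n$ agents are in epoch $i$, at least $n/2$ agents are null at every step. Consequently, conditionally on the entire history, at each step $t\ge t_b$ every non-null agent $a$ with $a.\children\ge 1$ becomes the initiator of a copy with probability at least $\tfrac{n/2}{n(n-1)}\ge\tfrac1{2n}$.

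\textbf{Step 2: reduction to a union bound over ``leaves''.} Viewed from $t_b$, each non-null agent $a$ will perform its $a.\children$ remaining copies, each copy spawning a fresh non-null agent whose $\children$ is one smaller; recursively, the agents spawned from $a$ form a tree whose \emph{leaves} are the agents eventually produced with $\children=0$ by their parent's last copy. An agent that is non-null at $t_b$ reaches $\children=0$ exactly when its own last-copy child is created, so the whole proliferation is complete exactly when the last leaf is created, after which every $\children$ stays $0$ (only a copy or an epoch change touches it). The number of leaves still to be created after $t_b$ is at most $\sum_a 2^{a.\children-1}\le\tfrac12\Phi(C_{t_b})\le n/4$ (sum over the agents non-null at $t_b$). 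So it suffices to show that, w.h.p., every one of these at most $n/4$ leaves is created within $c_3\,n\log n$ steps of $t_b$ for a suitable constant $c_3$, and then to take as $\tdone$ the first step $\ge t_b+c_3\,n\log n$; by Lemma~\ref{lem:phase_clock} with a large enough phase-clock parameter $m$ one may assume (w.h.p.) that all agents are still simultaneously in epoch $i$ at that step, since the window in which this holds has length $\Omega(n\log n)$ with a controllable constant $d_1>c_3$.

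\textbf{Step 3: each leaf is fast.} Fix a leaf $w$ to be created after $t_b$. At time $t_b$ the construction of $w$ that remains is a chain of at most $\childmax$ copy events that must occur in order: the deepest ancestor $a$ of $w$ that is still non-null at $t_b$ completes its copies one by one until it spawns the next ancestor of $w$; that agent then does the same; and so on down to $w$. (The chain has at most $a.\children\le\childmax$ links because along it the spawned agent's $\children$ strictly decreases from at most $\childmax$ down to $0$.) By Step~1, whenever a link of this chain is next in line, it completes after a number of steps stochastically dominated, conditionally on the past, by $\mathrm{Geom}(\tfrac1{2n})$; hence the creation time of $w$ minus $t_b$ is stochastically dominated by a sum of $\childmax$ i.i.d.\ $\mathrm{Geom}(\tfrac1{2n})$ variables, whose upper tail at $T$ equals $\Pr[\mathrm{Bin}(T,\tfrac1{2n})<\childmax]$. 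Since $\childmax=\lfloor\log\nlower-\log\ell\rfloor-2=O(\log n)$, a Chernoff bound makes this at most $n^{-\eta-1}$ for $T=c_3\,n\log n$ with $c_3$ a large enough constant depending on $\eta$; a union bound over the at most $n/4$ leaves then finishes the proof.

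The delicate point — and the step I expect to be hardest to state cleanly — is the chain observation of Step~3: because each agent's completion hinges on only $\childmax=O(\log n)$ \emph{sequential} copy events rather than on an exponentially branching family, its completion time is a negative binomial with $O(\log n)$ successes, which is concentrated sharply enough that the crude union bound over the merely linearly many leaves goes through. A level-by-level or ``doubling-phase'' analysis of the proliferation would instead leak a spurious extra $\log n$ factor into the time bound, so it is important not to organize the argument that way; the companion ingredient is the invariant of Step~1, which is exactly what simultaneously bounds the number of leaves by $n/4$ and keeps the per-step copy probability at $\ge\tfrac1{2n}$.
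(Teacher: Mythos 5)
Your proposal is correct and follows essentially the same route as the paper: the paper also bounds the number of non-null agents by $2\ell\cdot 2^{\childmax}\le n/2$ to keep the null pool at size $\ge n/2$, and it formalizes your ``chain of at most $\childmax$ sequential copy events per leaf'' as \emph{virtual agents} $\nu(v,\alpha,t)$ indexed by $\alpha\in[0,2^{\iota(v)}-1]$, whose division counter $\kappa$ must reach $\iota(v)=O(\log n)$ via per-step success probability $\ge\frac{1}{2(n-1)}$, finishing with the same Chernoff-plus-union-bound and the same appeal to a large $d_1$ in Lemma~\ref{lem:phase_clock} to rule out $t_e<\tdone$. Your conserved potential $\Phi$ and the leaf/ancestor-chain bookkeeping are just a different (arguably cleaner) presentation of the identical probabilistic core.
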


\begin{proof}
Let $t_e$ be the maximum integer such that all agents are in epoch $i$ in configuration $C_{t_e}$,
and let $V_k$ be the set of agents whose rank is in $S_k$ in configuration $C_0$.
Since $|V_k| \leq 2\ell$, during epoch $i$,
\ie, in configurations $C_{t_b}, C_{t_b+1}, \dots, C_{t_e}$,
there are always at most $2 \ell \cdot 2^{\childmax} \leq n / 2$ non-null agents,
thus there are at least $n / 2$ null agents.

We will prove the lemma using the concept of \emph{virtual agents} introduced by \cite{SNY+12} with slight modifications.
Let $U(t)$ be the set of null agents at time step $t$. From the above discussion,
$U(t) \geq n/2$ holds for all $t \in [t_b, t_e]$.
Define $\iota(v) = C_{t_b}(v).\children$. 
If proliferation completes correctly in epoch $i$,
each $v \in V\setminus U(t_b)$ must have generated $2^{\iota(v)}$ non-null agents, including itself.
In this proof, we map those agents to distinct integers from $0$ to $2^{\iota(v)}-1$.
For any $v \in V \setminus U(t_b)$, $\alpha \ge 0$, and $j \ge 1$, we denote 
the $j$-th bit from the end of $\alpha$ by $b(v,\alpha,j)$, \ie, $b(v,\alpha,j)=\lfloor \alpha / 2^{j-1} \rfloor \bmod 2$.
By definition, $b(v,\alpha,j) = 0$ for any $j \ge \iota(v)+1$.
Then, for any $t \geq t_b$, we define the \emph{virtual agent} $\nu(v,\alpha,t)$ and its number of divisions $\kappa(v,\alpha,t)$ as follows:
$$ \nu(v,\alpha,t_b)=v, \quad \kappa(v,\alpha,t_b)=0,$$
and for any $t \ge t_b$, 
\begin{align*}
\nu(v,\alpha,t+1)
&=
\begin{cases}
u & \text{if }  b(v,\alpha,\kappa(v,\alpha,t)+1)=1 \land \exists u \in U(t): \Gamma_{t} = (\nu(v,\alpha,t),u)\\
\nu(v,\alpha,t) & \text{otherwise},
\end{cases}
\\
\kappa(v,\alpha,t+1)
&=
\begin{cases}
\min(\kappa(v,\alpha,t)+1,\iota(v)) & \text{if } \exists u \in U(t): \Gamma_{t} =(\nu(v,\alpha,t),u)\\
\kappa(v,\alpha,t) & \text{otherwise}.
\end{cases}
\end{align*}

\noindent 
Clearly, proliferation completes before time step $t$ if $\kappa(v,\alpha,t) = \iota(v)$ for all $v \in V \setminus U(t_b)$ and $\alpha \in [0, 2^{\iota(v)}-1]$. Note that a virtual agent $\nu(v,\alpha,t)$ is determined solely by $C_{t_b}$ and $\Gamma_{t_b}, \Gamma_{t_b+1}, \dots, \Gamma_{t-1}$. Therefore, at any time step $t \in [t_b, t_e]$, the virtual agent $\nu(v,\alpha,t)$ interacts with a non-null agent as the initiator with a probability of $\frac{1}{n} \cdot \frac{U(t)}{n-1} \geq \frac{1}{2(n-1)}$. Consequently, by the Chernoff Bound, $\kappa(v,\alpha,t) = \iota(v)$ holds with high probability for some $\tdone = t_b + O(n \log n)$. By applying the union bound, with high probability, this condition is met for all $v \in V \setminus U(t_b)$ and $\alpha \in [0, 2^{\iota(v)}-1]$, indicating that the proliferation completes within $O(n \log n)$ steps with high probability. Note that we can disregard the probability that $t_e < \tdone$, as we can choose arbitrarily large constant $d_1$ in Lemma \ref{lem:phase_clock}.
 \end{proof}

\begin{lemma}
\label{lem:coldb}
Let $P=\coldb(\nlower,\nupper)$.
Starting from any initialized configuration $C_0 \in \cinit(P)$,
execution $\Xi_{P}(C_0,\rs)$ solves the collision detection.
It stabilizes within 
$O(n^{3/2} \cdot \sqrt{\log n})$ time steps (respectively, $O(n^{3/2} \cdot \log^{3/2} n)$ time steps) in expectation
(respectively, with high probability).
This protocol uses $\tilde{O}(n)$ states per agent.
\end{lemma}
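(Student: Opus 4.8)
The plan is to verify in turn (i) correctness, (ii) the high-probability time bound, (iii) the expected time bound, and (iv) the state bound.

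\textbf{Correctness.} First I would show that an agent sets $\collision\gets1$ only when a collision really exists. In the backup part (Lines~9--10) this happens only when two agents of equal rank meet. For the second part I would establish the invariant that, throughout any epoch $i$, every non-null agent holds a $\grid$ whose first coordinate is $\rank'-k\cdot\ell$ for some agent of rank $\rank'\in S_k$, $k=(i-1)\bmod z$: this holds because $\grid$ is overwritten (to a freshly generated pair or to $\bot$) whenever an agent's epoch increases, and proliferation (Lines~23--25) copies $\grid$ verbatim, \emph{including the nonce}. Hence if there is no collision, any two non-null agents sharing a first coordinate also share the nonce, so Line~27 never fires; consequently $\collision$ never changes, every initialized configuration is already stable, and every reachable stable configuration outputs $0$ everywhere. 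If a collision exists, a colliding ordered pair interacts eventually with probability $1$ (Borel--Cantelli), the responder then sets $\collision\gets1$, and $\epidemic(\collision)$ (Line~28) spreads this monotonically, so with probability $1$ every agent eventually outputs $1$ and that configuration is stable; a short reachability argument shows that every reachable stable configuration likewise has $\collision=1$ everywhere. Together these give that $\coldb$ solves collision detection.

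\textbf{Case split for the running time.} Let $m_{x'}$ be the number of agents of rank $x'$ and let $x=\sum_{x'}m_{x'}(m_{x'}-1)$ be the number of colliding ordered pairs (if $x=0$ we are already stable, by the above). If $x\ge\ell$, then $x=\Omega(\sqrt{n\log n})$, and Lemma~\ref{lem:not_so_many_colliding_pairs} yields a raised flag within $O(n^{3/2})$ steps, both in expectation and w.h.p.; by Lemma~\ref{lem:epidemic} (and, for the expectation, re-applying it after a failed attempt, since $\epidemic(\collision)$ is monotone) $\epidemic$ then finishes in $O(n\log n)$ more steps, so the whole protocol stabilizes in $O(n^{3/2})$ steps. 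So assume $1\le x<\ell$. Since $m_{x'}-1\le\binom{m_{x'}}{2}$ for $m_{x'}\ge1$, every segment satisfies $\sum_{x'\in S_k}m_{x'}\le|S_k|+\sum_{x'}\binom{m_{x'}}{2}\le\ell+x/2<2\ell$, so the hypothesis of Lemma~\ref{lem:proliferation} holds in every epoch. Fix a colliding rank $x^*\in S_{k^*}$ and two agents $u^*,w^*$ of rank $x^*$; the $r$ epochs dedicated to $S_{k^*}$ are $i_j=k^*+1+(j-1)z\le jz$, $j=1,\dots,r$.

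\textbf{The per-epoch success probability (the crux).} I would introduce the event $G$ that the phase clock behaves as in Lemma~\ref{lem:phase_clock} with a large constant $d_1$ and, in addition, that in each of epochs $i_1,\dots,i_r$ the proliferation of Lemma~\ref{lem:proliferation} completes within $d_3 n\log n$ steps of that epoch's start; by those lemmas and a union bound over $r=O(\log n)$ epochs, $\Pr(G)\ge1-O(n^{-\eta'})$ for a tunable constant $\eta'$, and on $G$ epoch $i_j$ is entered by time $\le jz\cdot d_2 n\log n$ and lasts $\ge d_1 n\log n$ steps. The key claim is: on $G$, conditioned on the history up to the start of epoch $i_j$ and on no agent yet having $\collision=1$, the probability that some agent sets $\collision\gets1$ during epoch $i_j$ is at least $1/3$. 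To see this, $u^*$ and $w^*$ draw independent nonces at the start of $i_j$, hence differing nonces with probability $1/2$; given that, once proliferation has completed there are exactly $2^{\childmax}=\Theta(\sqrt{n/\log n})$ agents carrying $(x^*-k^*\ell,\chi_{u^*})$ and exactly $2^{\childmax}$ carrying $(x^*-k^*\ell,\chi_{w^*})$ (a perfect binary-tree expansion that never stalls, since $\ge n/2$ null agents remain throughout the epoch), so there are $\Theta(n/\log n)$ ordered witness pairs, each of which triggers Line~27 when it interacts; over the remaining $\ge(d_1/2)n\log n$ steps of the epoch, the probability that no witness pair interacts is at most $(1-\Theta(1/(n\log n)))^{(d_1/2)n\log n}\le e^{-\Theta(d_1)}\le1/4$ for $d_1$ large. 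Multiplying gives at least $\tfrac12\cdot\tfrac34>\tfrac13$ for large $n$. Keeping this conditioning clean while stitching together Lemmas~\ref{lem:phase_clock} and~\ref{lem:proliferation}, the fresh per-epoch nonces, and the fresh scheduler randomness is the step I expect to take the most care.

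\textbf{Assembling the bounds, and the state count.} From the per-epoch claim, on $G$ the number $J$ of dedicated epochs needed to raise a flag is stochastically dominated by a geometric variable with success probability $1/3$, so $\Pr(J>r\mid G)\le(2/3)^r=O(n^{-\eta})$ for $r=\lceil3\eta\log\nupper\rceil$. On $G\cap\{J\le r\}$ a flag is raised by time $(rz+1)d_2 n\log n=O(rz\cdot n\log n)=O(n^{3/2}\log^{3/2}n)$, and $\epidemic$ adds $O(n\log n)$ w.h.p., which gives the w.h.p.\ bound. For the expectation, write the stabilization time as $T_{\mathrm{raise}}+T_{\mathrm{spread}}$ with $\ex[T_{\mathrm{spread}}]=O(n\log n)$ (Lemma~\ref{lem:epidemic} plus monotone re-application); on $G$, $\ex[T_{\mathrm{raise}}\mathbb{1}_{J<\infty}]\le\sum_{j\ge1}\Pr(J\ge j\mid G)(jz+1)d_2 n\log n\le\sum_{j\ge1}(2/3)^{j-1}(jz+1)d_2 n\log n=O(zn\log n)=O(n^{3/2}\sqrt{\log n})$, since $\sum_{j\ge1}j(2/3)^{j-1}$ converges; the residual events $\overline G$ and $\{J=\infty\}$ have probability $O(n^{-\eta'})$ and $O(n^{-\eta})$, and on them $T_{\mathrm{raise}}$ is at most the backup's hitting time, whose second moment is $O(n^4)$, so by Cauchy--Schwarz their contribution to $\ex[\cdot]$ is $o(n^{3/2})$ provided $\eta,\eta'$ are chosen at least $1$. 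Finally, the state bound is deterministic: $\epoch\in[0,rz+1]$ and the first coordinate of $\grid$ in $[1,\ell]$ each contribute $\tfrac12\log n+O(\log\log n)$ bits, the nonce, $\timer$, $\leader$ and $\collision$ contribute $O(1)$ bits, and $\children\in[0,\childmax]$ contributes $O(\log\log n)$ bits, so each agent has $O(n\log^2 n)=\tilde O(n)$ states, excluding $\rank$.
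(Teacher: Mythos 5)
Your proposal is correct and follows essentially the same route as the paper's proof: the same case split on the number of colliding pairs via Lemma~\ref{lem:not_so_many_colliding_pairs}, the same use of Lemma~\ref{lem:proliferation} to get $\Theta(\nlower/\ell)$ witnesses per group, the same per-epoch success probability $\tfrac12(1-\epsilon)\ge\tfrac13$, and the same geometric sum for the expectation (your Cauchy--Schwarz treatment of the failure event is a slightly more careful version of the paper's $(2/3)^r\cdot O(n^2)$ term, and your explicit flagging of the conditioning issue is a point the paper glosses over).
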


\begin{proof}
We can easily verify that the number of states is $\tilde{O}(n)$ by checking the domain of the variables, excluding the input variable $\rank$: $\grid$ uses $O(\sqrt{n})$ states and $\epoch$ uses $O(r\cdot z)=\tilde{O}(\sqrt{n})$ states, while all other variables use only a poly-logarithmic number of states.

Proving the correctness is also easy. The first part of the protocol (\ie, the backup protocol)
eventually makes an agent raise a flag if and only if there is at least one colliding pair of agents.
The raised flag is propagated to the entire population via the epidemic protocol. 
False-positive errors are not permitted also in the second part of $\coldb$.
In each epoch $i \in [1, r \cdot z]$, the group identifiers are initialized. When no agents share a common rank, all non-null agents $v$ with $\pi_1(v.\grid) = \alpha$ originate from a common agent with a rank $k \cdot \ell + \alpha$, where $k = (i-1) \bmod z$. Therefore, all such agents have the same nonce, thus preventing any of them from raising a collision flag.
Thus, starting from any initialized configuration $C_0 \in \cinit(P)$,
execution $\Xi_{P}(C_0,\rs)$ eventually reaches a stable configuration with probability $1$,
where all agents output the correct answer.

Next, we bound the stabilization time of $P$. From the above discussion, the stabilization time is zero when there is no rank collision in $C_0$. 
Thus, assume the presence of a colliding pair $u, v \in V$, \ie, $u.\rank = v.\rank$, in $C_0$.
Let $k \in \{0,1,\dots,z-1\}$ be the integer such that $C_0(u).\rank \in S_k$.
By Lemma \ref{lem:not_so_many_colliding_pairs}, we assume that there are only $o(\sqrt{n\log n})$ colliding pairs of agents without loss of generality. 
Under this assumption, $|V_k| \le 2\ell$,
where $V_k$ is the set of agents whose ranks are in $S_k$ in configuration $C_0$.
Otherwise, there are $\sum_{j=1}^{\ell} \binom{x_j}{2} \ge \sum_{j=1}^{\ell} (x_j-1)>\ell = \Omega(\sqrt{n\log n})$ colliding pairs, yielding a contradiction,
where $x_j$ is the number of agents whose rank is $k\cdot \ell+j$. 
With probability $1/2$, $u$ and $v$ generate different nonces and thus obtain distinct group identifiers when they enter epoch $i$.
By Lemma \ref{lem:proliferation}, with high probability,
each of their group identifiers is copied to $2^{\childmax}\ge \nlower/(8\ell)$ 
agents, respectively, in the first half period of epoch $i$, at least $\frac{d_1}{2}  n \log n$ time steps, where 
$d_1$ is the constant that appears in Lemma \ref{lem:phase_clock}.
Thereafter, during the next $\frac{d_1}{2} \cdot n \log n$ time steps, there is at least one time step at which
one agent in $u$'s group and one agent in $v$'s group have an interaction
with probability at least
$$1-\left(1- \frac{\binom{\nlower/8\ell}{2}}{\binom{n}{2}}\right)^{\frac{d_1}{2} \cdot n \log n}
= 1 - \left (1-\Omega \left ( \frac{1}{n \log n}\right ) \right)^{\frac{d_1}{2} \cdot n \log n} \ge 1-\epsilon$$
for any constant $\epsilon$ because we can choose an arbitrarily large constant for $d_1$.
To conclude, some agent raises a collision flag in epoch $i$ with probability $1/2 -\epsilon \ge 1/3$.
There are $r$ such epochs dedicated to $S_k$, and those epochs appear in every $z$ epochs.
Thus, $\Xi(C_0,\rs)$ stabilizes within $O(r\cdot z \cdot n \log n)= O(n^{3/2} \log^{3/2} n)$ steps with probability 
at least $1-(1-1/3)^{r} \ge 1-2^{-r} \ge 1-n^{-\eta}$, \ie, with high probability.
The expected stabilization time is bounded by
$$\sum_{j=1}^{r} \left (1-\frac{1}{3}\right )^{j-1} \cdot \frac{j  z}{3} \cdot O(n \log n)
+ \left(1-\frac{1}{3} \right ) ^r \cdot O(n^2) = O(z n \log n)=O\left (n^{\frac{3}{2}} \sqrt{\log n} \right ).$$
 \end{proof}

\begin{remark}[No false-positive error even without the assumptions]
\label{remark:coldb}
The epoch $v.\epoch$ of each agent $v \in V$ is monotonically non-decreasing.
Therefore, no false-positive errors occur in the execution of $\coldb(\nlower, \nupper)$ even without the previously stated assumption that there is always a unique leader $a_L$ and the parameters satisfy $\nlower \leq n \leq \nupper$, $\nlower = \Theta(n)$, and $\nupper = \Theta(n)$.
\end{remark}

\begin{remark}[Derandomization]
\label{remark:derandomization}
One might consider $\coldb$ a randomized algorithm because it generates a binary nonce at Line 16, seemingly deviating from the model defined in Section \ref{sec:model}. However, we can easily derandomize our protocol by exploiting the randomness from the uniformly random scheduler $\rs$. The derandomization can proceed as follows:
(i) When an agent increases its epoch, it enters the \emph{waiting mode} instead of initializing $\grid$ and $\children$;
(ii) When an agent in the waiting mode has an interaction,
the agent can generate a 1-bit nonce $\chi$, setting $\chi = 0$ if it is the initiator of the interaction, and $\chi = 1$ otherwise.,
thus it initializes $\grid$ and $\children$ with the generated nonce,
and returns to the \emph{normal mode}.
Let $u$ and $v$ be any two agents with the same rank. Note that the nonce generated by $u$ and $v$ may not be independent, which could potentially complicate the analysis. However, their nonces are dependent only if they have previously met, at which point they would have raised a collision flag via the backup protocol. Therefore, this dependency does not pose a problem.
\end{remark}

\begin{algorithm}[t]
\caption{
$\cold$ at an interaction where
$\aaa$ and $\bbb$ are an initiator and a responder, respectively.
} 
\label{al:cold}
 Execute the counting protocol \approximate \cite{BKR19}\;
\If(\tcp*[f]{$\aaa$ never updates its $\level$}){$\bbb.\level$ increased at this interaction}{
Reset all variables of $\bbb$ maintained by $\coldb$ and its submodules
to their initial values.
}
\If{$
(\aaa.\level, \aaa.\lognum)=(\bbb.\level, \bbb.\lognum)
\land
\aaa.\countfin = \bbb.\countfin = 1
$
}
{
Execute $\coldb(r,2^{\aaa.\lognum-1},2^{\aaa.\lognum+1})$
}
 \If{$\aaa.\rank = \bbb.\rank$}{$\bbb.\collision \gets 1$\;}
 Execute $\epidemic(\collision)$\;
\end{algorithm}

\subsection{Collision Detection without Assumptions}
\label{sec:with}

To run $\coldb(\nlower,\nupper)$, we require a unique leader $a_L \in V$ and two parameters, $\nlower$ and $\nupper$, that satisfy $\nlower \leq n \leq \nupper$, $\nlower = \Theta(n)$, and $\nupper = \Theta(n)$. We address the collision detection problem without this assumption
by running this protocol and the counting protocol $\approximate$, given by Berenbrink, Kaaser, and Radzik~\cite{BKR19}, in parallel. The protocol $\approximate$ elects a leader using the leader election protocol presented by G{\k{a}}sieniec and Stachowiak~\cite{GS18}, and computes $\lfloor \log n \rfloor$ or $\lceil \log n \rceil$ with high probability.
This protocol uses $O(\log n \cdot \log \log n)$ states with high probability.

The protocol $\approximate$ maintains four variables: $\lognum \in \mathbb{N}$, $\level \in \mathbb{N}$, $\countfin \in \{0,1\}$, and $\leader \in \{0,1\}$~\footnote{
We have changed the names of the variables for consistency with the notations of this paper.
}.
The following proposition holds by the definition of $\approximate$, which enables us to integrate these two protocols. (See \cite{BKR19} for the definition of $\approximate$.)
\begin{proposition} 
\label{prop:approximate}
An execution of the protocol $\approximate$ has the following properties:
\begin{itemize}
    \item For all $v \in V$, $v.\level$ is monotonically non-decreasing 
    and increases only when $v$ joins an interaction as the responder,    
    \item For all $v \in V$, $v.\lognum$ never changes while $v.\countfin = 1$, 
    \item For all $v \in V$,
    $v.\countfin$ reverts from $1$ to $0$ only if $v.\level$ increases, and
    \item When two agents with the same level meet, the $\countfin$ value is updated according to the one-way epidemic from the initiator to the responder. Only an agent with $\leader = 1$ can set $\countfin$ to $1$, in addition to this rule,    
\end{itemize}
\end{proposition}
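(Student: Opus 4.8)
The plan is to establish Proposition~\ref{prop:approximate} simply by reading off each of the four bullets from the construction of \approximate\ in Berenbrink, Kaaser, and Radzik~\cite{BKR19}; this is a bookkeeping statement about the internal structure of an existing protocol rather than a result requiring a new argument. First I would recall the relevant structure of \approximate: it runs a leader election subprotocol (the G\k{a}sieniec--Stachowiak protocol~\cite{GS18}) together with a phase-clock-driven counting routine that proceeds in \emph{levels}; in each level the leader estimates $\log n$ and, when the level's clock expires, the level is incremented and the estimate is frozen. The variable $\level$ is exactly the level counter, $\lognum$ is the current estimate of $\lfloor\log n\rfloor$ or $\lceil\log n\rceil$, $\countfin$ is the flag signalling that counting within the current level has finished, and $\leader$ is the leader-election output bit.

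For the first bullet, I would point out that $\level$ is only ever incremented (never decremented) in \approximate, and that by design the level counter is carried forward by a one-way-epidemic-style propagation in which only the responder of an interaction adopts a strictly larger level; hence $v.\level$ is monotonically non-decreasing and changes only when $v$ is a responder. For the second bullet, the estimate $\lognum$ is recomputed only during the active counting phase of a level, which by definition of $\countfin$ is precisely the period when $\countfin = 0$; once $\countfin = 1$ the level's estimate is frozen until a new level begins, so $\lognum$ is constant while $\countfin = 1$. The third bullet is the contrapositive bookkeeping fact that $\countfin$ is reset to $0$ only at the moment a fresh level starts, i.e.\ only when $\level$ increases. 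The fourth bullet records how $\countfin$ spreads: among agents at the same level it is propagated by one-way epidemic from initiator to responder (so a responder may learn $\countfin = 1$ from an initiator at its level), and the \emph{only} way $\countfin$ is first set to $1$ (as opposed to merely copied) is by the leader when its own level's clock expires, i.e.\ only an agent with $\leader = 1$ can originate a $\countfin = 1$.

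Concretely, I would write the proof as: ``Each claim follows directly from the description of \approximate\ in~\cite{BKR19}; we briefly indicate why,'' followed by one short sentence per bullet as sketched above, with a pointer to the relevant part of~\cite{BKR19} (the level/counting phase structure and the epidemic propagation of $\countfin$). I would not attempt to reconstruct the probabilistic guarantees of \approximate\ here, since those (correctness of the estimate w.h.p., $O(\log n\cdot\log\log n)$ state bound) are quoted separately and are not what this proposition asserts.

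The main obstacle is not mathematical difficulty but fidelity: the proposition is only as trustworthy as the match between the paper's renamed variables ($\level$, $\lognum$, $\countfin$, $\leader$) and the actual variables and update rules of \approximate\ in~\cite{BKR19}. So the real work is to verify the variable-name correspondence (already flagged in the footnote) and to confirm that \approximate\ indeed (i) never decrements levels, (ii) updates levels only on the responder side, (iii) freezes the estimate once the per-level counting flag is raised, and (iv) raises that flag only at the leader. If any of these were false in~\cite{BKR19}, the later integration with $\coldb$ — in particular the reset-on-level-increase mechanism of Algorithm~\ref{al:cold} and Remark~\ref{remark:coldb}'s ``no false positives'' argument — would break, so this proposition is the load-bearing interface and its proof should make the dependence on~\cite{BKR19} explicit.
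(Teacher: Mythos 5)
Your approach matches the paper's: the paper offers no separate proof of Proposition~\ref{prop:approximate}, simply asserting that it ``holds by the definition of \approximate'' and pointing the reader to~\cite{BKR19}, which is exactly the read-off-from-the-construction argument you propose (with somewhat more detail on the per-bullet justifications). Your added emphasis on verifying the variable-name correspondence is a reasonable elaboration but not a departure from the paper's route.
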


For any variable $\var$, we say that a configuration $C$ is $\var$-stable if, for every configuration $C'$ reachable from $C$ and every agent $v \in V$, the value of $\var$ for $v$ is the same in both $C$ and $C'$, \ie, $C(v).\var = C'(v).\var$.
For any set of variables $\chi$, we say that a configuration $C$ is $\chi$-stable if $C$ is $\var$-stable for every $\var$ in $\chi$.
Due to the analysis in \cite{BKR19}, we have the following lemma:
\begin{lemma}[\cite{BKR19}]
\label{lem:approximate}
Let $C_0 \in \cinit(P)$ be any initialized configuration and let $\Xi_{P}(C_0, \rs) = C_0, C_1, \dots$,
where $P=\approximate$.
With high probability, there is an integer $t = O(n\log^2 n)$ such that:
\begin{itemize}
\item $C_{t}$ is a $\{\lognum, \leader, \level\}$-stable configuration,
\item All agents share the same value in $\lognum$, which is either $\lfloor \log n \rfloor$ or $\lceil \log n \rceil$, in $C_t$,
\item There exists exactly one agent $a_L$ with $\leader = 1$ in $C_t$,
\item All agents satisfy $\countfin = 0$ in $C_{t-1}$,
\item Exactly two agents satisfy $\countfin = 1$, one of whom is $a_L$, in $C_{t}$.
\end{itemize}
The protocol $\approximate$ uses $O(\log n \cdot \log \log n)$ states with high probability. 
\end{lemma}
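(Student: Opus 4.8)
The plan is to treat the convergence behaviour of \approximate as a black box imported from \cite{BKR19}, and to spend the real effort only on the two bullets concerning $\countfin$, which are specific to the way we intend to use the protocol and are most naturally derived from Proposition~\ref{prop:approximate}. Concretely, I would first invoke the analysis of \cite{BKR19} to obtain a time $t_1 = O(n \log^2 n)$ such that, with high probability, (i) the underlying leader-election module of \cite{GS18} has settled on a unique agent $a_L$ with $a_L.\leader = 1$, (ii) every agent holds the same value $\lognum \in \{\lfloor \log n \rfloor, \lceil \log n \rceil\}$, and (iii) the $\level$ variable of every agent has reached its final value and never changes again. These three facts, together with the $O(\log n \cdot \log \log n)$ state bound, are exactly the content of the first three bullets and the final sentence of the lemma; nothing new must be proved for them beyond a union bound over the high-probability events of \cite{BKR19}, each of which holds within $O(n \log^2 n)$ steps.

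The substance of the proof is to locate the precise step $t$ and to verify the behaviour of $\countfin$ at $t-1$ and $t$. The key structural observation, supplied by Proposition~\ref{prop:approximate}, is that $\countfin$ reverts from $1$ to $0$ only when $\level$ increases. Since no level increases after $t_1$, each agent's $\countfin$ is monotonically non-decreasing on $[t_1, \infty)$, and the only mechanism that can create a fresh $\countfin = 1$ is the leader rule: the same-level one-way epidemic cannot manufacture a $1$ out of an all-zero population. I would then argue that from some point after $t_1$ every agent has $\countfin = 0$, because an agent attains the final level only through a level increase, which by Proposition~\ref{prop:approximate} resets its $\countfin$ to $0$, and until the leader re-certifies counting at the final level there is no source of $\countfin = 1$, so the all-zero state persists. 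Defining $t$ as the first step after this point at which $a_L$ sets its own $\countfin$ to $1$, the configuration $C_{t-1}$ has all agents at $\countfin = 0$ (the fourth bullet). At the interaction of step $t$ the leader certifies: it sets its own $\countfin$ to $1$ and, by the same-level epidemic rule applied in the same transition, its interaction partner also receives $\countfin = 1$, while every other agent is untouched; hence exactly two agents, one of them $a_L$, carry $\countfin = 1$ in $C_t$ (the fifth bullet). A standard waiting-time bound on the gap between $t_1$ and the leader's certification gives $t = t_1 + O(n \log^2 n) = O(n \log^2 n)$ with high probability, and $C_t$ is $\{\lognum, \leader, \level\}$-stable simply because $t > t_1$.

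I expect the main obstacle to be justifying the all-zero condition at $C_{t-1}$, that is, ruling out that some lagging agent still carries $\countfin = 1$ from an earlier counting phase at the moment the leader certifies the final level. This is where the precise reset-and-certification semantics of \approximate from \cite{BKR19} must be used: a level increase is broadcast through the population and resets every agent's $\countfin$, and the leader's certification of the final level cannot precede the completion of that broadcast. I would make this rigorous by identifying the last step at which any agent's level increases (which resets the last lagging agent to $\countfin = 0$) and showing that the leader's next certification occurs strictly afterwards; between these two events no level changes and the leader does not certify, so $\countfin$ has no available source and the configuration stays all-zero until step $t$. A secondary point to check directly against the transition function of \approximate is that the certification step produces exactly two $1$'s rather than one; this follows from the leader performing the certification as the initiator, so that the one-way epidemic copies its freshly set $\countfin = 1$ onto the responder (or, if \cite{BKR19} sets both participants explicitly at certification, it is immediate). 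Either way, this claim is a direct consequence of the stated transition rules once the all-zero precondition is established.
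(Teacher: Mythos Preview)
The paper does not prove this lemma at all: it is stated with the attribution \cite{BKR19} and introduced by ``Due to the analysis in \cite{BKR19}, we have the following lemma.'' All five bullets, including the two concerning $\countfin$, as well as the $O(\log n \cdot \log\log n)$ state bound, are simply imported from Berenbrink, Kaaser, and Radzik. Proposition~\ref{prop:approximate} is stated not in order to derive Lemma~\ref{lem:approximate}, but to justify the integration of $\coldb$ with $\approximate$ in the proof of Theorem~\ref{theorem:main}.

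Your plan therefore goes well beyond what the paper does. The decomposition you propose---treat the first three bullets and the state bound as a black box from \cite{BKR19}, then pin down the step $t$ at which the leader certifies and argue the $\countfin$ pattern at $t-1$ and $t$ from Proposition~\ref{prop:approximate}---is a sensible reconstruction, but you have already identified its weak point: Proposition~\ref{prop:approximate} only says that $\countfin$ \emph{cannot} drop from $1$ to $0$ without a level increase; it does not say that a level increase \emph{forces} $\countfin$ to $0$. Consequently the all-zero precondition at $C_{t-1}$ does not follow from the proposition as stated, and (as you correctly anticipate) must be extracted from the actual transition rules of $\approximate$ in \cite{BKR19}. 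Since the paper itself makes no attempt at this and simply cites the result, there is no discrepancy to flag; your write-up is more detailed than the paper's treatment, not in conflict with it.
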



If agents could determine whether they have already reached a configuration $C_t$, integrating $\coldb$ and $\approximate$ would be straightforward: when two agents $\aaa$ and $\bbb$ meet, they execute $\approximate$ if either $\aaa.\countfin = 0$ or $\bbb.\countfin = 0$; otherwise, they execute $\coldb(\nlower,\nupper)$, setting $\nlower = 2^{\aaa.\lognum - 1}$ and $\nupper = 2^{\aaa.\lognum + 1}$. However, such straightforward integration is not feasible, as agents cannot detect this condition. In particular, the value of the variable $\countfin$ in $\approximate$ may revert from $1$ back to $0$.


The pseudocode of $\cold$ is presented in Algorithm \ref{al:cold}, which utilizes the properties outlined in Proposition \ref{prop:approximate} and \ref{lem:approximate} to integrate $\coldb$ and $\approximate$.
Suppose that two agents $\aaa$ and $\bbb$ have an interaction where they serve as the initiator and the responder, respectively. Initially, they execute $\approximate$ (Line 29). As specified in Proposition \ref{prop:approximate}, $\aaa.\level$ does not increase during $\approximate$. If $\bbb.\level$ does increase, all variables maintained by $\coldb$ for $\bbb$ are reset to their initial values (Lines 30, 31). In essence, each time an agent's level increases, it terminates the current execution of $\coldb$ and restarts it.
Subsequently, $\aaa$ and $\bbb$ execute $\coldb(\nlower, \nupper)$ with $\nlower = 2^{\aaa.\lognum - 1}$ and $\nupper = 2^{\aaa.\lognum + 1}$, but only if they have matching values in both $\level$ and $\lognum$, and both observe $\countfin = 1$. According to Lemma \ref{lem:approximate}, these conditions will eventually be met with high probability.
However, there remains a small but non-zero probability that $\approximate$ fails to meet these conditions, which would prevent $\coldb$ from executing. Thus, we also implement a backup protocol outside of $\coldb$ (Lines 34–36), eliminating the risk of false-negative errors.


\begin{theorem}
\label{theorem:main}
Protocol $\cold$ solves the collision detection problem with probability $1$. 
The execution of $\cold$ under the uniformly random scheduler
stabilizes within $O(n^{3/2} \log^{3/2} n)$ time steps and uses $\tilde{O}(n)$ states per agent (excluding the space needed to store the input variable $\rank$) with high probability. It stabilizes within $O(n^{3/2} \sqrt{\log n})$ time steps in expectation.
\end{theorem}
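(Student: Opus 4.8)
The plan is to combine Lemma~\ref{lem:approximate} (the good behaviour of \approximate) with Lemma~\ref{lem:coldb} (the correctness and running time of \coldb), together with the observation that the backup protocol running at Lines~34--36 of Algorithm~\ref{al:cold} alone guarantees probability-$1$ correctness. I would organize the proof around three claims: (a) $\cold$ never produces a false-positive; (b) $\cold$ eventually stabilizes to the correct answer with probability $1$; and (c) the stabilization time is as stated, both w.h.p.\ and in expectation.

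For (a), I would invoke Remark~\ref{remark:coldb}: since $v.\epoch$ is monotonically non-decreasing regardless of whether the leader/bound assumptions hold, the second part of \coldb{} cannot raise a spurious flag, the explicit backup at Lines~34--36 only raises a flag on a genuine rank collision, and \epidemic{} only propagates an already-raised flag. For (b), note that the standalone backup protocol at Lines~34--36 together with \epidemic($\collision$) is exactly the process analyzed in Lemma~\ref{lem:not_so_many_colliding_pairs}; if any colliding pair exists it will eventually meet and set $\collision=1$, after which the flag spreads to everyone; if none exists, no agent ever sets the flag. Hence the execution reaches a stable correct configuration with probability $1$, independently of how \approximate{} or the embedded \coldb{} behave. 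This is where I would emphasize that correctness does not rely on any high-probability event.

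For (c), I would condition on the high-probability event of Lemma~\ref{lem:approximate}: there is a time $t^\ast = O(n\log^2 n)$ after which $\lognum$, $\leader$, and $\level$ are frozen, all agents agree on $\lognum \in \{\lfloor \log n\rfloor, \lceil \log n\rceil\}$, there is a unique leader, and exactly two agents (one being the leader) have $\countfin = 1$. From that point on, by Proposition~\ref{prop:approximate} the $\countfin$ values can only spread by one-way epidemic from the leader, so within a further $O(n\log n)$ steps w.h.p.\ (Lemma~\ref{lem:epidemic}) every agent has $\countfin = 1$ and a matching $(\level,\lognum)$ pair, and thereafter every interaction executes $\coldb(r, 2^{\lognum-1}, 2^{\lognum+1})$. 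Since no agent's $\level$ ever increases again, the reset at Lines~30--31 never fires again, so the embedded \coldb{} runs undisturbed from an initialized configuration with parameters $\nlower = 2^{\lognum-1} = \Theta(n)$ and $\nupper = 2^{\lognum+1} = \Theta(n)$ satisfying $\nlower \le n \le \nupper$; Lemma~\ref{lem:coldb} then gives stabilization within a further $O(n^{3/2}\log^{3/2}n)$ steps w.h.p.\ and $O(n^{3/2}\sqrt{\log n})$ in expectation. Adding the $O(n\log^2 n)$ startup gives the w.h.p.\ bound. For the expectation bound I would handle the low-probability failure event of Lemma~\ref{lem:approximate} (and of the \coldb{} analysis) by falling back on the backup protocol, which finishes in $O(n^2)$ expected steps; since the failure probability is $O(n^{-\eta})$ for a large constant $\eta$, this contributes $o(1)$ to the expectation, leaving the $O(n^{3/2}\sqrt{\log n})$ term dominant.

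The main obstacle I anticipate is making the ``from that point on, \coldb{} runs undisturbed'' argument fully rigorous: one must check that on the good event the embedded \coldb{} really does see a clean initialized configuration once all agents reach $\countfin=1$ with a common $(\level,\lognum)$, and that the interplay with the possibly-late last reset does not corrupt an agent's \coldb{} state — i.e.\ that the last time any agent's $\level$ increases is $\le t^\ast$, so every agent's \coldb{} variables have been freshly reset at or before the moment it first executes \coldb{} with the final parameters. A second subtlety is that the two agents holding $\countfin=1$ at time $t^\ast$ need not yet have equal $(\level,\lognum)$ with the rest, so one should argue that the epidemic spreading of $\countfin$ and the (already stabilized) equality of $\level$ and $\lognum$ together bring the whole population into the ``execute \coldb'' regime within $O(n\log n)$ additional steps; this is routine given Proposition~\ref{prop:approximate} and Lemma~\ref{lem:epidemic}, but it is the step that needs the most care.
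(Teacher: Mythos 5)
Your overall plan coincides with the paper's: correctness with probability $1$ from the backup protocol plus the no-false-positive property of Remark~\ref{remark:coldb}, and the time bound by conditioning on the good event of Lemma~\ref{lem:approximate} and then invoking Lemma~\ref{lem:coldb}, with the expectation handled by the $O(n^{-\eta})$ failure probability against the backup's polynomial fallback. However, the one step you explicitly flag as ``the step that needs the most care'' is precisely the step you do not close, and your proposed way around it does not work as stated. You suggest waiting a further $O(n\log n)$ steps until \emph{every} agent has $\countfin=1$ and a matching $(\level,\lognum)$, and only then applying Lemma~\ref{lem:coldb}. But Lemma~\ref{lem:coldb} is stated for an \emph{initialized} configuration $C_0\in\cinit(P)$, and the configuration at that later time is not initialized: the leader $a_L$ and every agent that received $\countfin=1$ early have already been executing \coldb{} (advancing the phase-clock $\timer$, possibly the $\epoch$, and the $\grid$/$\children$ variables) during those $O(n\log n)$ steps. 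So you cannot invoke the lemma as a black box there.

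The paper closes this gap differently: it shows that the suffix $C_t,C_{t+1},\dots$ (starting already at the moment the leader first sets $\countfin=1$) is \emph{exactly} an execution of \coldb{} from an initialized configuration. The key observation is that in a genuine execution of \coldb, an agent $v$ cannot change its state at interaction $\Gamma_{t'}$ unless there is an initiator-to-responder interaction chain $\Gamma_{t_1}=(v_1,v_2),\dots,\Gamma_{t_{k-1}}=(v_{k-1},v_k)$ with $v_1=a_L$, $v_k=v$, and $t_{k-1}\le t'$; and by Proposition~\ref{prop:approximate} the $\countfin=1$ flag spreads from $a_L$ along exactly these same one-way-epidemic chains. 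Hence the agents that are ``gated out'' of \coldb{} in $\cold$ (because they still have $\countfin=0$) are exactly agents that would not yet have changed state in a pure \coldb{} run anyway, and the two executions coincide step for step. You need this identification (or an equivalent argument) to legitimately transfer the time bound of Lemma~\ref{lem:coldb}; without it, the proof is incomplete at its most delicate point.
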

\begin{proof}
The protocol $\cold$ is always correct. Specifically, the execution of $\cold$ eventually stabilizes to correctly output:
$1$ if any agents share the same rank, and $0$ otherwise.
This correctness immediately follows from the fact that (i) a backup protocol (Lines 34--36) eventually computes the correct answer, and (ii) $\coldb$ (Line 33) does not cause false-positive errors (see Remark \ref{remark:coldb}).

Next, we analyze the stabilization time. Let $C_0 \in \cinit(P)$ be any initialized configuration, and let $\Xi = \Xi_{P}(C_0, \rs) = C_0, C_1, \dots$, where $P = \cold$. According to Lemma \ref{lem:approximate}, with high probability, there exists $t = O(n \log^2 n)$ that meets the conditions specified by Lemma \ref{lem:approximate}. Since $C_t$ is $\{\lognum, \leader, \level\}$-stable, in the suffix $C_t, C_{t+1}, C_{t+2}, \dots$ of the execution, no agent changes the values of $\lognum$, $\leader$, or $\level$. Therefore, the assumptions made for $\coldb$ are always satisfied in this suffix. At interaction $\Gamma_{t-1}$, which transitions $C_{t-1}$ to $C_t$, the unique leader $a_L$ and one non-leader agent, say $u \in V$, raise the $\countfin$ flag and execute $\coldb$ for the first time since their level reaches $C_t(a_L).\level$. Thereafter, as stated in Proposition \ref{prop:approximate}, other agents also raise the $\countfin$ flag according to the one-way epidemic and never revert the flag from $1$ to $0$.
Moreover, in $C_t$, all agents except $a_L$ and $u$ retain the initial values for all variables maintained by $\coldb$, while during interaction $\Gamma_{t-1}$, $a_L$ and $u$ simulate one interaction of $\coldb$ with those initialized variables. Therefore, the suffix $C_t, C_{t+1}, C_{t+2}, \dots$ exactly corresponds to the execution of $\coldb$ because in an execution of $\coldb$, no agent changes its state before the unique leader $a_L$ has its first interaction. Specifically, an agent $v$ updates its states by interaction $\Gamma_{t'}$ only if there is an increasing sequence of integers $t_1, t_2, \dots, t_{k-1}$ and a sequence of agents $v_1, v_2, \dots, v_{k}$ such that: 
(i) $v_1 = v_L$, (ii) $v_k = v$, (iii) $t_{k-1} \le t'$,
and (iv) $\Gamma_{t_i} = (v_i, v_{i+1})$ for each $i \in [1, k-1]$.
Therefore, by Lemmas \ref{lem:coldb} and \ref{lem:approximate}, $\cold$ stabilizes within $O(n^{3/2} \log^{3/2} n)$ time steps with high probability and $O(n^{3/2} \sqrt{\log n})$ time steps in expectation.

The claim for the number of states immediately follows from Lemmas \ref{lem:coldb} and \ref{lem:approximate}.
 \end{proof}

\paragraph*{Acknowledgments} 
This work is supported by JST FOREST Program JPMJFR226U and 
JSPS KAKENHI Grant Numbers JP19K11826, JP20H04140, and JP20KK0232.

\bibliography{population}



\end{document}